\pgfplotsset{compat=newest}
\pgfplotsset{%
layers/standard/.define layer set={%
    background,axis background,axis grid,axis ticks,axis lines,axis tick labels,pre main,main,axis descriptions,axis foreground%
}{grid style= {/pgfplots/on layer=axis grid},%
    tick style= {/pgfplots/on layer=axis ticks},%
    axis line style= {/pgfplots/on layer=axis lines},%
    label style= {/pgfplots/on layer=axis descriptions},%
    legend style= {/pgfplots/on layer=axis descriptions},%
    title style= {/pgfplots/on layer=axis descriptions},%
    colorbar style= {/pgfplots/on layer=axis descriptions},%
    ticklabel style= {/pgfplots/on layer=axis tick labels},%
    axis background@ style={/pgfplots/on layer=axis background},%
    3d box foreground style={/pgfplots/on layer=axis foreground},%
    },
}
\newtheorem{thm}{Theorem}
\newtheorem{lem}[thm]{Lemma}\crefname{lem}{Lemma}{Lemmas}
\newtheorem{definition}[thm]{Definition}
\newtheorem{rem}[thm]{Remark}
\newtheorem{cor}[thm]{Corollary}
\DeclareMathOperator{\sign}{sign}
\DeclareMathOperator{\erf}{erf}
\DeclareMathOperator{\polylog}{polylog}
\renewcommand{\vec}[1]{\bm{#1}}
\newcommand{\mat}[1]{\bm{#1}}
\newcommand{\RR}{\mathbb R}
\newcommand{\N}{\mathbb N}
\newcommand{\R}{\mathbb R}
\newcommand{\C}{\mathbb C}
\newcommand{\eps}{\varepsilon}
\renewcommand{\epsilon}{\varepsilon}
\newcommand{\cheb}{\mathcal{T}}
\newcommand{\chebmat}{\bm{\mathcal{T}}}
\newcommand{\softO}{\widetilde{O}}
\renewcommand{\i}{\mathbf{i}}
\renewcommand{\dag}{*}
\renewcommand{\vec}[1]{\boldsymbol{\mathbf{#1}}}
\DeclarePairedDelimiterX{\ip}[2]{\langle}{\rangle}{#1,#2}
\newcommand{\ketbra}[2]{\ket{#1}\!\!\bra{#2}}
\crefname{line}{line}{lines}
\begin{document}
\date{}
\title{Improving quantum linear system solvers via a gradient descent perspective}

\author[1]{Sander Gribling}
\author[1, 2]{Iordanis Kerenidis}
\author[1]{D\'aniel Szil\'agyi}

\affil[1]{Université de Paris, CNRS, IRIF, F-75006, Paris, France}
\affil[2]{QC Ware, Palo Alto, USA and Paris, France}

\maketitle

\begin{abstract}
Solving systems of linear equations is one of the most important primitives in quantum computing that has the potential to provide a practical quantum advantage in many different areas, including in optimization, simulation, and machine learning. In this work, we revisit quantum linear system solvers from the perspective of convex optimization, and in particular gradient descent-type algorithms. This leads to a considerable constant-factor improvement in the runtime (or, conversely, a several orders of magnitude smaller error with the same runtime/circuit depth). 

More precisely, we first show how the asymptotically optimal quantum linear system solver of Childs, Kothari, and Somma is related to the gradient descent algorithm on the convex function $\|\mat A\vec x - \vec b\|_2^2$: their linear system solver is based on a truncation in the Chebyshev basis of the degree-$(t-1)$ polynomial (in $\mat A$) that maps the initial solution $\vec{x}_1 := \vec{b}$ to the $t$-th iterate $\vec{x}_t$ in the basic gradient descent algorithm. Then, instead of starting from the basic gradient descent algorithm, we use the optimal Chebyshev iteration method (which can be viewed as an accelerated gradient descent algorithm) and show that this leads to considerable improvements in the quantum solver. 
\end{abstract}

\section{Introduction}

The quantum linear systems (QLS) problem asks for a state that encodes the solution of a linear system $\mat A\vec{x}=\vec{b}$ where $\mat A \in \R^{n\times n}$ and $\vec{b} \in \R^n$.\footnote{Without loss of generality, one may assume that $\mat A$ is Hermitian.}
In a seminal work, Harrow, Hassidim, and Lloyd showed how to solve the QLS-problem using only $\polylog(n)$ queries to the input~\cite{HHL09}. Their algorithm has a polynomial dependence on the condition number $\kappa$ of $\mat A$ and the desired precision $\eps>0$. Subsequent work has improved the $\kappa$-dependence to near linear~\cite{Ambainis2012VTAA},\footnote{Here by a near linear runtime in terms of $\kappa$ we mean a runtime that scales as $\kappa \polylog(\kappa)$.} and the error-dependence to $\polylog(1/\eps)$~\cite{CKS17}. The algorithms in \cites{HHL09,Ambainis2012VTAA,CKS17} can all be viewed as implementing a polynomial transformation of $\mat A $ that approximates the inverse. They are based on various combinations of Hamiltonian simulation, quantum walks, linear combinations of unitaries, and most recently the quantum singular value transformation framework~\cite{Low2019hamiltonian,Gilyen2019SVT}.

More generally, an efficient QLS algorithm is a key building block for many downstream applications in optimization and in machine learning. Some examples include least-squares regression \cite{Chakraborty2019BE}, support-vector machines \cite{Rebentrost2014SVM,Kerenidis2021QSVM}, as well as differential equations \cite{linden2020quantum,tong2020fast}. Thus, optimizing the resources (depth in particular) required by the QLS algorithm would bring us closer to running these algorithms on near-term quantum hardware.

Currently, the best QLS algorithm is based on the polynomial by Childs, Kothari and Somma \cite{CKS17} (abbreviated as CKS from now on), which is evaluated by the quantum singular value transformation (QSVT) framework \cite{Gilyen2019SVT} by Gily\'{e}n, Su, Low and Wiebe, and sped-up using the variable-time amplitude amplification technique due to Ambainis \cite{Ambainis2012VTAA}. In a nutshell, the CKS polynomial is obtained by starting from the polynomial $p_t(x):= \frac{1-(1-x^2)^t}{x}$ for $t = \softO(\kappa^2)$, expressing it in the Chebyshev basis, and truncating the sum after $\softO(\kappa)$ terms. To obtain a quantum algorithm the resulting polynomial is then combined with either the linear combination of unitaries (LCU) lemma \cite{Berry2015Hamiltonian} or with the QSVT framework. The LCU approach is simpler but yields circuits with an extra multiplicative logarithmic factor in depth and an extra additive logarithmic number of ancillary qubits. On the other hand, the QSVT framework requires the computation of certain angles (see \cref{sec: QSVT} for details), and doing so efficiently in a numerically stable way is the subject of ongoing research \cites{Haah2019product, chao2020finding, dong2021efficient}. 

In this work, we revisit quantum linear system solvers by conceptually linking the previous and our techniques to classical gradient descent methods, and by providing the optimal quantum circuits within this framework. In particular, we point out a connection between classical iterative methods for solving $\mat A\vec{x}=\vec{b}$ and the polynomials used in the quantum algorithms, namely that they correspond to the ones used in the (basic) gradient descent on the convex function $\|\mat A\vec{x}-\vec{b}\|_2^2$. 

Our main contribution is to show that the optimal classical iterative method (known as \emph{Chebyshev iteration}) also leads to polynomials that can be implemented on a quantum computer, see \cref{thm:main}. This leads to a considerable constant-factor improvement in the runtime of QLS-solvers (or, conversely, an improved error with the same runtime/circuit depth).

In more detail, our approach is as follows. First recall that the Chebyshev iteration corresponds to the polynomial
\[
    q_t(x) := \frac{\left. \cheb_t\left(\frac{1+1/\kappa^2-2x^2}{1-1/\kappa^2}\right) \middle/ \cheb_t\left(\frac{1+1/\kappa^2}{1-1/\kappa^2}\right) \right.}{x},
\]
where $\cheb_t$ is the $t$-th Chebyshev polynomial of the first kind. These Chebyshev polynomials are defined as $\cheb_0(x) = 1$, $\cheb_1(x) = x$, and $\cheb_{t+1}(x) = 2x \cheb_t(x) -\cheb_{t-1}(x)$ for $t \geq 1$. They have the property that $|\cheb_t(x)|\leq 1$ for all $x \in [-1,1]$ and $t \geq 0$. One can show that the polynomial $q_t$ is an $\eps$-approximation of the inverse on the domain $x \in [-1, -1/\kappa] \cup [1/\kappa, 1]$, whenever $t \geq \frac12 \kappa \log(2\kappa^2 / \eps)$. To bound the maximum absolute value of $q_t$ on $[-1,1]$, we express $q_t(x)$ as $\sum_{i=0}^{t-1} c_i \cheb_{2i+1}(x)$ and bound the $1$-norm of the vector $\vec c$. The vector of coefficients can be used to implement $q_t(\mat A)/\|\vec c\|_1$ either directly via the linear combinations of unitaries approach, or via the quantum singular value transformation approach. In \cref{app:Good example functions for LCU} we show that this approach of bounding the $1$-norm of the vector of coefficients in the Chebyshev basis more generally leads to near optimal quantum algorithms via the LCU framework for a variety of continuous functions (powers of monomials, exponentials, logarithms) and discontinuous functions (the error function and by extension the sign and rectangle functions). 

The state of the art quantum linear systems solvers have a complexity that grows linearly in the condition number $\kappa$. In the small-$\kappa$ regime ($\kappa = O(n)$), it has long been known that $\Omega(\kappa)$ queries to the entries of the matrix are also needed for general linear systems~\cite{HHL09} and recently this bound has (surprisingly) been extended to the case of positive definite systems \cite{orsucci2021solving}. For larger $\kappa$ less is known. For example, we do not know if quantum algorithms can improve classical algorithms if $\kappa$ is large (i.e., can we beat matrix multiplication time?). We do not even have a linear lower bound: are $\Omega(n^2)$ queries needed when $\kappa = \Omega(n^2)$? In \cite{Dorn2009QueryComplexity} this question was answered positively when one wants to obtain a \emph{classical description} of $\mat A^{-1}\vec b$ and here we present a simplified proof of this result. 

\paragraph{Organization.} In \cref{sec:Convex optimization} we recall the gradient descent algorithm and elaborate its connection to the algorithm of~\cite{CKS17}. In \cref{sec:Chebyshev iteration} we discuss Chebyshev iteration, the optimal\footnote{We will make clear in what sense it is optimal in \cref{sec:Chebyshev iteration}} iterative method for solving linear systems. We show in \cref{sec:Quantum algorithm} that Chebyshev iteration lead to polynomials that can be efficiently implemented using for example the QSVT framework.
Finally, in \cref{sec:Query lower bound}, we give an overview of known lower bounds on the complexity of quantum linear system solvers both in the small $\kappa$ regime and in the large $\kappa$ regime. 

\section{Preliminaries}
\subsection{Polynomials and approximations}
\paragraph{Problem definition.} We consider linear systems that are defined by a Hermitian $n$-by-$n$ matrix $\mat A  \in \C^{n \times n}$ and a unit vector $\vec{b} \in \C^n$. We use $\kappa$ to denote the condition number of $\mat A $, that is, we assume that all non-zero eigenvalues of $\mat A $ lie in the set $D_\kappa := [-1,-1/\kappa] \cup [1/\kappa,1]$. Our goal is to \emph{approximately} solve the linear system 
\[
\mat A\vec{x} =\vec{b}.
\]
One can consider different notions of approximate solutions. Two natural ones are the following: 
\begin{enumerate}
    \item[1)] return $\vec{\tilde x}$ such that $\|\vec{\tilde x} - \mat A^{-1}\vec{b}\| \leq \eps$.
    \item[2)] return $\vec{\tilde x}$ such that $\|\mat A \vec{\tilde x} - \vec{b}\| \leq \eps$.
\end{enumerate}
Up to a change in $\eps$, the two notions are equivalent. Indeed, we have the chain of inequalities
\begin{equation}\label{eq:equivalence of QLSP definitions}
    \|\mat A\vec{x}-\vec{b}\| \leq \|\vec{x}-\mat A^{-1}\vec{b}\| \leq \kappa  \|\mat A\vec{x}-\vec{b}\|.
\end{equation}
We will focus on algorithms that achieve a polylogarithmic dependence in $\eps$. In \cref{sec:Chebyshev iteration} we construct the optimal degree-$t$ polynomial for approximation in the second notion, see \cref{def:QLSP}. Prior work \cite{CKS17, Chakraborty2019BE, Gilyen2019SVT} focused on the first notion of approximation, which is equivalent up to polylogarithm factors in the complexity. In \cref{sec: comparison} we show (numerically) that our polynomials also improve over prior work with respect to approximation in the first notion.

\paragraph{Polynomials.}
Given a polynomial $p(x) = \sum_{t=1}^T c_t x^t$ with coefficients $c_t \in \C$, and a Hermitian matrix $\mat A $, we define $p(\mat A) = \sum_{t=1}^T c_t \mat A^t$. If we let $\mat A  = \sum_{i=1}^n \lambda_i \vec{u}_i \vec{u}_i^*$ be the eigendecomposition of $\mat A $, then $p(\mat A) = \sum_{i=1}^n p(\lambda_i) \vec{u}_i \vec{u}_i^*$. 
 
\paragraph{Chebyshev decomposition.}
It is also useful to consider the \emph{Chebyshev decomposition} of $p(x)$, i.e., the decomposition
\[
    p(x) = \sum_{i=0}^t c_i \cheb_i(x)
\]
in the basis $\{ \cheb_0(x), \cheb_1(x), \dots, \cheb_{t}(x)\}$, for some vector $\vec c = (c_i)_{i \in \{0,\ldots,t\}}$ of coefficients. One can give an analytic expression for the coefficients $c_i$ using the fact that the Chebyshev polynomials are orthogonal with respect to the \emph{Chebyshev measure} which is defined in terms of the Lebesgue measure as $\dif \mu(x) = (1-x^2)^{-1/2}\, \dif x$. In other words, $c_i = \int_{-1}^1 \frac{p(x) \cheb_i(x)}{\sqrt{1-x^2}}\dif x$. Note that in practice this integral is rarely computed explicitly, as there exist efficient interpolation-based methods for computing the coefficient-vector $\vec{c}$ \cite{Gentleman1972DCT}.

\paragraph{Approximating the inverse.} We focus on methods to obtain a vector $\vec{\tilde x}$ that approximates $\mat A ^{-1}\vec{b}$ that are based on polynomials that approximate the inverse function $\lambda \mapsto \lambda^{-1}$ on the domain $[1/\kappa,1]$ (in the case of positive definite matrices) or $D_\kappa$ (in the general case). For example, let $\mat A$ be a Hermitian matrix with eigenvalues in $[1/\kappa,1]$ and let $p:\R \to \R$ be a polynomial such that $|p(\lambda)-\lambda^{-1}|\leq \eps$ for $\lambda \in [1/\kappa,1]$. Then $\vec{\tilde x} := p(\mat A) \vec{b}$ satisfies 
\[
\|\vec{\tilde x} - \mat A^{-1}\vec{b}\| = \|\sum_i (p(\lambda_i) -\lambda_i^{-1}) \vec{u}_i \vec{u}_i^* \vec{b}\| \leq \|\sum_i (p(\lambda_i) -\lambda_i^{-1}) \vec{u}_i \vec{u}_i^*\| \|\vec{b}\| \leq \eps \|\vec{b}\|.
\]

\subsection{Quantum preliminaries}
There exist different input models that one might consider when solving the linear system problem. 
In the standard case of a dense matrix $\mat A $, one might assume that all entries of $\mat A $ are already stored in memory. Alternatively, if $\mat A $ is sparse, sometimes it is more efficient to consider oracle access to its nonzero entries.  In the quantum setting, this \emph{sparse-access} model is particularly amenable to speedups. In the sparse-access model we assume that access to $\mat A $ is provided through two oracles
\[
    \mathcal O_\text{nz}: \ket{j, \ell} \mapsto \ket{j, \nu(j, \ell)} \text{ and } \mathcal O_A: \ket{j, k, z} \mapsto \ket{j, k, z \oplus A_{jk}},
\]
where $\nu(j, \ell)$ is the row index of the $\ell$th nonzero entry of the $j$th column. Many quantum algorithms can be phrased naturally in terms of a different input model called the \emph{block-encoding} model \cite{Low2019hamiltonian, Chakraborty2019BE}. (One can efficiently construct a block-encoding, given sparse access.)
\begin{definition}[Block encoding]
    Let $\mat A  \in \RR^{n \times n}$ be a Hermitian matrix, and let $N\in \N$ be such that $n = 2^N$, and let $\mu \geq 1$. The $(N + a)$-qubit operator $U_{\mat A}$ is a $(\mu, a)$-block-encoding of $\mat A $ if it satisfies $\mat A  = \mu(\bra{0}^{\otimes a}\otimes I) U_{\mat A} (\ket{0}^{\otimes a}\otimes I)$.
\end{definition}
For convenience, if we are not interested in the number of ancillary qubits $a$, we simply call $U_{\mat A}$ a $\mu$-block-encoding.
In what follows, we assume that we have access to $U_{\mat A}$, an (exact\footnote{Constructing exact block-encodings of arbitrary matrices $\mat A $ that are given in the sparse-access input model is a priori not possible with a finite gate set. Instead, one can construct a block-encoding of an approximation $\tilde{\mat A}$, by allowing an overhead in the circuit depth that is proportional to $\log(\|\mat A - \tilde{\mat A}\|)$.}) $(1, a)$-block-encoding of $\mat A $. The case of $\mu$-block-encodings with $\mu > 1$ can be reduced to the former by replacing our starting matrix with $\mat A /\mu$, that has eigenvalues in $D_{\mu\kappa}$. Furthermore, we assume that $\mat A $ is invertible, with eigenvalues in $D_\kappa$. 
Finally, we assume that we have access to $U_{\vec b}$, a unitary that (exactly) prepares the state $\ket{\vec{b}} = \vec{b}/\norm{\vec{b}}$ on input $\ket{\vec 0}$: $U_{\vec b} \ket{\vec 0} = \ket{\vec b}$.

We define the quantum linear system problem (QLSP) as follows:
\begin{definition}[Quantum linear systems]\label{def:QLSP}
    Let $\mat A  \in \R^{n\times n}$ be a Hermitian matrix with eigenvalues in $D_\kappa$, let $\vec{b} \in \R^n$, and let $\eps > 0$. Given a block-encoding $U_{\mat A}$ of $\mat A $ and a state preparation oracle $U_{\vec b}$, output a state 
    \[
    \ket{\phi} = \alpha \ket{0}\ket{\vec x} + \beta \ket{1}\ket{\psi}
    \]
    where $\|\ket{\mat A\vec x} - \ket{b}\| \leq \eps$, $\ket{\psi}$ is an arbitrary state, and $\alpha, \beta \in \C$ are such that  $|\alpha|^2 + |\beta|^2 = 1$ and $|\alpha|^2 \geq 2/3$.
\end{definition}
As mentioned before, the widely-used definition from the the literature \cites{CKS17, Chakraborty2019BE, Gilyen2019SVT} is equivalent to \cref{def:QLSP} up to a change in $\epsilon$. In this paper we use \cref{def:QLSP}, as our algorithm is optimal in this sense. In \cref{sec: comparison} we (numerically) show that our algorithm also improves over prior work with respect to the more widely used definition. 

Recent approaches for solving the QLS problem are based on applying a block-encoding of $p(\mat A)$ to $\ket{\vec{b}}$. In the next two sections we describe two ways of computing a block-encoding of $p(\mat A)$: through the QSVT framework, or by decomposing $p$ in the Chebyshev basis, computing each term individually, and combining the results using the linear combination of unitaries lemma (the LCU approach). 

\subsubsection{QSVT approach} \label{sec: QSVT}
The most straightforward way for evaluating a polynomial quantumly is through the quantum singular value transformation framework \cite{Gilyen2019SVT}. Using QSVT, one can directly evaluate any polynomial $p$ as long as its sup-norm is suitably bounded. Here the sup-norm of $p$ is defined as 
\[
    \norm{p}_\infty := \max_{x \in [-1,1]} |p(x)|.
\]
This is achieved by performing a series of rotations by angles $\vec{\Phi}=(\phi_1, \dots, \phi_t)$ on a single qubit, that induces a degree-$t$ polynomial transformation of the singular values of $\mat A $. Determining these angles efficiently in a numerically stable way is the subject of ongoing research \cites{Haah2019product, chao2020finding, dong2021efficient}. 
Below, we state a version of QSVT suitable for evaluating even and odd polynomials, since this is the case we are most interested in.
\begin{thm}[\cite{Gilyen2019SVT}*{Corollary 18}, for block-encodings]\label{thm:QSVT}
    Let $\mat A  \in \RR^{n \times n}$ be Hermitian, and let $U_{\mat A}$ be a 1-block-encoding of $\mat A $. 
    Let $\Pi = (\ketbra{0}{0})^{\otimes a} \otimes I$, and suppose that $p \in \R[x]$ is a degree-$t$ polynomial of parity-$(t \bmod 2)$ satisfying $\norm{p}_\infty \leq 1$. Then there exists a $\vec \Phi \in \R^t$ such that
    \begin{equation*}
        p(\mat A) = \begin{cases}
            \left( \bra{+}\otimes\Pi \right) \left( \ketbra{0}{0} \otimes U_{\vec \Phi} + \ketbra{1}{1} \otimes U_{-\vec \Phi} \right) \left( \ket{+}\otimes \Pi \right) & \text{ if } t \text{ is odd, and}\\
            \left( \bra{+}\otimes\Pi \right) \left( \ketbra{0}{0} \otimes U_{\vec \Phi} + \ketbra{1}{1} \otimes U_{-\vec \Phi} \right) \left( \ket{+}\otimes \Pi \right) & \text{ if } t \text{ is even,}
        \end{cases}
    \end{equation*}
    where $U_\Phi$ is defined as the phased alternating sequence
    \begin{equation*}
        U_{\vec \Phi} := \begin{cases}
            e^{\i\phi_1 (2 \Pi - I)}U_{\mat A} \prod_{j=1}^{(t-1)/2} \left( e^{\i\phi_{2j} (2\Pi - I)} U_{\mat A}^\dag e^{\i \phi_{2j+1}(2\Pi -I)}U_{\mat A} \right) & \text{ if } n \text{ is odd, and}\\
            \prod_{j=1}^{t/2} \left( e^{\i\phi_{2j-1} (2\Pi - I)} U_{\mat A}^\dag e^{\i \phi_{2j}(2\Pi -I)}U_{\mat A} \right) & \text{ if } n \text{ is even.}
        \end{cases}
    \end{equation*}
\end{thm}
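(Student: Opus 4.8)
The plan is to derive this statement from the general quantum singular value transformation (QSVT) theorem of \cite{Gilyen2019SVT} by making two standard reductions explicit: passing from singular value transformations to eigenvalue transformations of Hermitian matrices, and passing from the complex polynomials that quantum signal processing naturally produces to real polynomials bounded in sup-norm.

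First I would recall the black-box statement. A $1$-block-encoding $U_{\mat A}$ with ancilla projector $\Pi = (\ketbra{0}{0})^{\otimes a}\otimes I$ satisfies $\mat A = \mu(\bra{0}^{\otimes a}\otimes I)U_{\mat A}(\ket{0}^{\otimes a}\otimes I)$ with $\mu = 1$, i.e.\ $\mat A$ is the ``$\Pi$-corner'' of $U_{\mat A}$; this is exactly the object to which \cite{Gilyen2019SVT} attaches a singular value transformation, with the projectors on the two sides both equal to $\Pi$ because the block-encoding is symmetric. Their theorem guarantees that for any real $p$ of degree $t$ and parity $t \bmod 2$ with $\norm{p}_\infty \leq 1$ there is a phase vector $\vec\Phi \in \R^t$ so that the alternating phased product $U_{\vec\Phi}$ in the statement block-encodes a complex polynomial transformation $P^{(SV)}(\mat A)$ with $\mathrm{Re}\,P = p$. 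Conjugating every phase ($\vec\Phi \mapsto -\vec\Phi$) replaces $P$ by $\overline{P}$, and since $\tfrac12(P + \overline{P}) = \mathrm{Re}\,P = p$, the $\ket{+}$-controlled average $(\bra{+}\otimes\Pi)\bigl(\ketbra{0}{0}\otimes U_{\vec\Phi} + \ketbra{1}{1}\otimes U_{-\vec\Phi}\bigr)(\ket{+}\otimes\Pi)$ block-encodes $p^{(SV)}(\mat A)$. This is precisely the form of the right-hand side in the theorem, which explains why the controlled pair $U_{\vec\Phi}, U_{-\vec\Phi}$ appears.

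It then remains to identify $p^{(SV)}(\mat A)$ with $p(\mat A)$, and this is where Hermiticity of $\mat A$ together with the parity hypothesis on $p$ enters. Writing $\mat A = \sum_i \lambda_i \vec u_i \vec u_i^*$, the singular values of $\mat A$ are $|\lambda_i|$ with left singular vectors $\vec u_i$ and right singular vectors $\sign(\lambda_i)\vec u_i$. For odd $p$ the singular value transformation maps right singular vectors to left ones, so $p^{(SV)}(\mat A) = \sum_i p(|\lambda_i|)\,\vec u_i\,(\sign(\lambda_i)\vec u_i)^* = \sum_i p(|\lambda_i|)\sign(\lambda_i)\,\vec u_i\vec u_i^* = \sum_i p(\lambda_i)\,\vec u_i\vec u_i^* = p(\mat A)$, using $p(|\lambda_i|)\sign(\lambda_i) = p(\lambda_i)$ for odd $p$; for even $p$ the transformation acts within the left singular subspace, giving $p^{(SV)}(\mat A) = \sum_i p(|\lambda_i|)\,\vec u_i\vec u_i^* = \sum_i p(\lambda_i)\,\vec u_i\vec u_i^* = p(\mat A)$, using $p(|\lambda_i|) = p(\lambda_i)$ for even $p$. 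Combining with the previous step yields the asserted identity in both cases.

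The only genuine content beyond invoking \cite{Gilyen2019SVT} is the parity bookkeeping of the alternating sequence: checking that $t$ phases (rather than $t+1$) suffice for a parity-$(t \bmod 2)$ polynomial, which comes from absorbing one global phase into the others, and checking that the $\ket{+}$ trick extracts the real part consistently for both parities of $t$ (this is why $U_{\vec\Phi}$ is assembled from $U_{\mat A}$ and $U_{\mat A}^\dag$ in the way shown, beginning with a lone $U_{\mat A}$ in the odd case and with a $U_{\mat A}^\dag U_{\mat A}$ pair in the even case). I expect this bookkeeping to be the main — though entirely routine — point; everything else is a direct specialization of the general theorem.
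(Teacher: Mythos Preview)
The paper does not prove this theorem: it is stated as a direct citation of \cite{Gilyen2019SVT}*{Corollary~18} and used as a black box, with no accompanying proof in the paper itself. Your sketch is a correct reconstruction of why the cited result holds---the $\ket{+}$-controlled average to extract the real part of the QSP polynomial and the parity argument identifying $p^{(SV)}(\mat A)$ with $p(\mat A)$ for Hermitian $\mat A$ are exactly the ingredients in the original source---so there is nothing to compare against here beyond noting that you have supplied an argument where the paper simply defers to the literature.
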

Note that QSVT is fundamentally limited to evaluating polynomials that are bounded by $1$ in absolute value on $[-1, 1]$ (since the output is a unitary matrix). Approximations $p$ of $x^{-1}$ on $D_\kappa$ are inherently not bounded by $1$ on the interval $[-1,1]$: they are around $\kappa$ for $x=1/\kappa$. The QSVT framework allows us to evaluate $p(x)/M$ on $\mat A $ where $M$ is an upper bound on $\norm{p}_\infty$. This subnormalization reduces the success probability of for example a QSVT-based QLS-solver. It is thus important to obtain polynomial approximations $p$ that moreover permit a good bound $M$. 
\subsubsection{LCU approach}
 An alternative approach is based on the Linear Combinations of Unitaries (LCU) lemma \cite{Berry2015Hamiltonian}. It uses the fact that Chebyshev polynomials have a particularly nice vector of angles, which permits an efficient implementation of the LCU circuit. 

\begin{lem}[{\cite[Lem.~9]{Gilyen2019SVT}}] \label{lem: cheb angles}
    Let $\vec \Phi \in \R^t$ be such that $\phi_1 = (1-t)\frac\pi2$ and $\phi_i = \frac\pi2$ for $2 \leq i \leq t$. For this choice of $\vec \Phi$, the polynomial $p$ from \cref{thm:QSVT} is $\cheb_t$, the $t$-th Chebyshev polynomial of the first kind.
\end{lem}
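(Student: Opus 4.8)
The plan is to pass from the block-encoding picture to a $2\times 2$ computation, using the standard ``qubitization'' structure underlying \cref{thm:QSVT}. Fix a singular value $x\in[-1,1]$ of $\mat A$. The construction of \cref{thm:QSVT} leaves invariant a two-dimensional subspace associated with $x$ on which, after a fixed $\phi$-independent change of basis, the block-encoding $U_{\mat A}$ acts as a rotation $R(x)=e^{\i\theta\sigma_x}$ with $\theta:=\arccos x$ (and $U_{\mat A}^\dag$ as $R(x)^{-1}=e^{-\i\theta\sigma_x}$), while each reflection $e^{\i\phi(2\Pi-I)}$ acts as $e^{\i\phi\sigma_z}=\diag(e^{\i\phi},e^{-\i\phi})$; moreover the polynomial $p(x)$ that \cref{thm:QSVT} extracts (through the $\bra{+}\otimes\Pi$, $\ket{+}\otimes\Pi$ sandwich) is exactly the $(1,1)$-entry of the resulting ordered product of these $2\times 2$ unitaries. (Here we use that $\norm{\cheb_t}_\infty=1$, so $\cheb_t$ is admissible in \cref{thm:QSVT}.) It therefore suffices to evaluate this product for $\phi_1=(1-t)\tfrac\pi2$ and $\phi_2=\dots=\phi_t=\tfrac\pi2$ and to check that its $(1,1)$-entry equals $\cheb_t(x)$.

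The computation rests on two elementary $SU(2)$ identities: $e^{\i\frac\pi2\sigma_z}=\i\sigma_z$ and $\sigma_z\,e^{\i\alpha\sigma_x}\,\sigma_z=e^{-\i\alpha\sigma_x}$. With them, every ``pair'' block of \cref{thm:QSVT} (one $U_{\mat A}^\dag$ and one $U_{\mat A}$ with the two enclosing phases equal to $\tfrac\pi2$) collapses to a single rotation,
\[
    (\i\sigma_z)\,e^{-\i\theta\sigma_x}\,(\i\sigma_z)\,e^{\i\theta\sigma_x}
    =-\,\sigma_z\, e^{-\i\theta\sigma_x}\,\sigma_z\,e^{\i\theta\sigma_x}
    =-\,e^{2\i\theta\sigma_x}.
\]
Hence, with \emph{all} phases equal to $\tfrac\pi2$ the length-$t$ product telescopes to $(-1)^{\lfloor t/2\rfloor}e^{\i t\theta\sigma_x}$ in the even case, and to the same expression multiplied on the left by a dangling $\i\sigma_z$ (coming from the extra front factor $e^{\i\phi_1(2\Pi-I)}U_{\mat A}$) in the odd case. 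The unusual angle $\phi_1=(1-t)\tfrac\pi2$ is precisely the correction: it replaces the first phase gate $e^{\i\frac\pi2\sigma_z}=\i\sigma_z$ by $e^{\i(1-t)\frac\pi2\sigma_z}$, which, since $e^{\i\pi\sigma_z}=-I$, equals $(-1)^{t/2}\,\i\sigma_z$ when $t$ is even and $(-1)^{(t-1)/2}I$ when $t$ is odd. In each parity this cancels the accumulated sign $(-1)^{\lfloor t/2\rfloor}$ (and, in the odd case, also absorbs the dangling $\i\sigma_z$), leaving the clean product $e^{\i t\theta\sigma_x}=\cos(t\theta)\,I+\i\sin(t\theta)\,\sigma_x$. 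Its $(1,1)$-entry is $\cos(t\theta)=\cos(t\arccos x)=\cheb_t(x)$ for $x\in[-1,1]$, and since both sides are polynomials in $x$ they agree identically.

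I expect the only genuine difficulty to be bookkeeping rather than mathematics: one must pin down, in the precise conventions of \cite{Gilyen2019SVT}*{Corollary 18}, which basis of the invariant subspace is used---hence whether $U_{\mat A}$ becomes an $e^{\i\sigma_x}$- or an $e^{\i\sigma_y}$-type rotation and with which sign of $\theta$---and then track the accumulated global phase carefully enough that the value $(1-t)\tfrac\pi2$ falls out exactly rather than up to an unexplained sign. Since the algebra above only ever used $\sigma_z X\sigma_z=-X$ for $X\in\{\sigma_x,\sigma_y\}$, none of these choices affect the outcome, but they must be fixed, and the even/odd parities should be carried in parallel. An alternative route avoiding the $SU(2)$ picture is a direct induction on $t$: check $p=1$ for $t=0$ and $p=x$ for $t=1$, then show that appending two more factors to the sequence realizes the three-term recurrence $\cheb_{t+1}=2x\cheb_t-\cheb_{t-1}$; but matching the angle indexing across sequences of different lengths is fiddly, so I would prefer the qubitization argument.
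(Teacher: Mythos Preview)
The paper does not prove this lemma at all: it is quoted verbatim from \cite{Gilyen2019SVT}*{Lem.~9} and used as a black box. So there is no ``paper's own proof'' to compare against. Your $SU(2)$ argument is correct and is the standard way this fact is established; the pair-collapse identity and the role of $\phi_1=(1-t)\tfrac\pi2$ as a sign (and, in the odd case, $\i\sigma_z$-) cancellation are exactly right, and your final $(1,1)$-entry read-off $\cos(t\arccos x)=\cheb_t(x)$ is valid.

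It is worth noting that the paper's subsequent paragraph ``Computing a single Chebyshev polynomial'' performs the full-Hilbert-space analogue of your computation: it shows that with the Chebyshev angles the phased sequence simplifies to $U_{2t+1}=U_{\mat A}W^t$ with $W=(2\Pi-I)U_{\mat A}^\dag(2\Pi-I)U_{\mat A}$, using precisely the same identities $e^{-\pi\i t(2\Pi-I)}=(-1)^tI$ and $e^{\i\frac\pi2(2\Pi-I)}=\i(2\Pi-I)$ that become your $e^{\i\frac\pi2\sigma_z}=\i\sigma_z$ on the invariant $2$-plane. Restricting that global identity to the $2$-plane gives $W\to e^{2\i\theta\sigma_x}$ and $U_{\mat A}W^t\to e^{\i(2t+1)\theta\sigma_x}$, which is your endpoint. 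In other words, your proof and the paper's circuit simplification are the same calculation in two different representations; you go one step further and actually extract the polynomial, whereas the paper defers that step to the citation.

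One small omission: the statement of \cref{thm:QSVT} sandwiches between $\bra{+}\otimes\Pi$ and $\ket{+}\otimes\Pi$, i.e.\ averages the $(1,1)$-entries of $U_{\vec\Phi}$ and $U_{-\vec\Phi}$, not just $U_{\vec\Phi}$. For these particular angles the average is harmless---flipping all $\phi_j\mapsto-\phi_j$ leaves each pair block $-e^{2\i\theta\sigma_x}$ unchanged and also leaves the scalar $e^{\i(1-t)\frac\pi2\sigma_z}$ unchanged (it is $\pm I$ or $\pm\i\sigma_z$), so $U_{-\vec\Phi}$ yields the same $2\times2$ matrix---but you should say so explicitly rather than asserting that $p$ is ``exactly the $(1,1)$-entry of $U_{\vec\Phi}$''.
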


\paragraph{Computing a single Chebyshev polynomial.}
We consider in more detail the above circuit for computing $\cheb_{2t+1}(\mat A)$ for a matrix $\mat A $ with a 1-block-encoding $U_{\mat A}$. Let $\Pi = \ketbra{0}{0} \otimes I$ be the same projector as in \cref{thm:QSVT} (we drop the exponent ${\otimes a}$ for convenience, or equivalently, we assume that the block-encoding $U_{\mat A}$ has a single auxiliary qubit). By \cref{lem: cheb angles} the unitary
\[
    U_{2t+1} = e^{-\pi \i t (2\Pi - I)}U_{\mat A} \prod_{j=1}^{t} \left( e^{\i \frac\pi2 (2\Pi - I)} U_{\mat A}^\dag e^{\i \frac\pi2(2\Pi -I)}U_{\mat A} \right)
\]
satisfies $(\bra{0}\otimes I) U_{2t+1} (\ket{0}\otimes I) = \cheb_{2t+1}(\mat A)$. 
We first simplify the above. Note that $2\Pi - I$ has eigenvalues $\pm 1$ and therefore $e^{-\pi \i t (2\Pi-I)} = (-1)^t I$ and $e^{\i  \frac{\pi}{2}(2\Pi-I)} = \i(2\Pi-I)$. This means that 
\begin{align*}
    U_{2t+1} &= (-1)^t U_{\mat A} \prod_{j=1}^{t} \left( \i(2\Pi-I) U_{\mat A}^\dag \i(2\Pi-I) U_{\mat A} \right) \\
    &= (-1)^t (\i)^{2t} U_{\mat A} \prod_{j=1}^{t} \left( (2\Pi-I)
    U_{\mat A}^\dag (2\Pi-I) U_{\mat A} \right) \\
    &= U_{\mat A} \prod_{j=1}^{t} \Big(\underbrace{ (2\Pi-I) U_{\mat A}^\dag (2\Pi-I) U_{\mat A} }_{=:W}\Big) = U_{\mat A} W^t. 
\end{align*}
In other words, $U_{2t+1}$ can be viewed as $t$ applications of the unitary $W$, followed by a single application of $U_{\mat A}$. The circuit for even Chebyshev polynomials $U_{2t}$ is very similar, and can be obtained from $U_{2t+1}$ by removing the final application of (left multiplication by) $U_{\mat A}$ -- however, since we are ultimately interested in implementing the inverse, an odd function, we do not describe the circuit in more detail.

\paragraph{Computing a linear combination of Chebyshev polynomials.}
Given the above circuit that computes block-encodings of $\cheb_{2k+1}(\mat A)$ for $k \geq 0$, the next step is to compute a block-encoding of linear combinations of the form
\begin{equation} \label{eq: linear comb}
    p(\mat A) = \sum_{i=0}^{t-1} c_{i} \cheb_{2i+1}(\mat A).
\end{equation}
This can be achieved using a version of the LCU algorithm due to~\cite{CKS17}. In particular, the key to an efficient implementation of the linear combination $\sum_{i=0}^{t-1} c_i U_{2i+1}$ is the efficient implementation of the operator $\sum_{i=0}^{t-1} \ketbra ii \otimes U_{2i+1}$, which we achieve by introducing an $l = (\lceil \log_2t \rceil + 1)$-qubit \emph{counter} register, and successively applying $W, W^2, W^4, \dots, W^{2^{l-1}}$ controlled on qubits $0, 1, \dots, l-1$ of the counter, followed by a single application of $U_{\mat A}$ at the end. In~\cite{CKS17}*{Theorem~4}, this circuit is analyzed for a specific polynomial-approximation of the inverse. The analysis naturally extends to arbitrary polynomials of the form~\eqref{eq: linear comb}.
\begin{thm}[based on \cite{CKS17}] \label{thm: LCU + Cheb}
    Let $\mat A $ be a Hermitian matrix with eigenvalues in $D_\kappa$, let $U_{\mat A}$ be its block-encoding, and let $U_{\sqrt{\vec{c}}}$ be a unitary that prepares the state $\frac{1}{\sqrt{\norm{\vec{c}}_1}} \sum_{i=1}^n \sqrt{c_i} \ket{i}$. Then, there exists an algorithm that computes a $\norm{\vec{c}}_1$-block-encoding of $p(\mat A)$ using $t+1$ calls to controlled versions of $U_{\mat A}$ and $U_{\mat A}^\dag$, and a single call to each of $U_{\sqrt{\vec{c}}}$ and $U_{\sqrt{\vec{c}}}^\dag$. This circuit uses a logarithmic number of additional qubits, and has a gate complexity of $O(t\polylog(nt\kappa/\eps))$.
\end{thm}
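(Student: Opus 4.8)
The plan is to reduce the statement to the linear-combination-of-unitaries construction of \cite{CKS17}*{Theorem~4}, observing that the only property of their polynomial used there is its expansion \eqref{eq: linear comb} in odd Chebyshev polynomials together with the $1$-norm of the coefficient vector. Concretely, I would adjoin an $l$-qubit counter register with $l=\lceil\log_2 t\rceil+1$ and, assuming for now that $c_i\ge 0$ as in the statement (negative coefficients are handled by the standard device of absorbing $\sign(c_i)$ into $U_{2i+1}$ via a controlled phase, costing only $\polylog$ extra gates), build the circuit that: (i) applies $U_{\sqrt{\vec c}}$ to the counter, producing $\frac{1}{\sqrt{\norm{\vec c}_1}}\sum_{i=0}^{t-1}\sqrt{c_i}\,\ket i$; (ii) applies $V:=\sum_{i=0}^{t-1}\ketbra ii\otimes U_{2i+1}$, where $U_{2i+1}=U_{\mat A}W^i$ and $W=(2\Pi-I)U_{\mat A}^\dag(2\Pi-I)U_{\mat A}$ as derived above; (iii) applies $U_{\sqrt{\vec c}}^\dag$ to the counter.

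Verifying the block-encoding property is then a short computation: projecting the counter and the block-encoding ancilla of $U_{\mat A}$ onto $\ket{\vec 0}$ yields $\frac{1}{\norm{\vec c}_1}\sum_{i=0}^{t-1}c_i\,(\bra{\vec 0}\otimes I)U_{2i+1}(\ket{\vec 0}\otimes I)=\frac{1}{\norm{\vec c}_1}\sum_{i=0}^{t-1}c_i\,\cheb_{2i+1}(\mat A)=p(\mat A)/\norm{\vec c}_1$, where the middle equality is exactly the single-Chebyshev identity established via \cref{lem: cheb angles}. Since $|p(x)|\le\sum_i c_i|\cheb_{2i+1}(x)|\le\norm{\vec c}_1$ on $[-1,1]$ while $p$ approximates $1/x\ge 1$ on $D_\kappa$, we have $\norm{\vec c}_1\ge 1$, so this is a valid $\norm{\vec c}_1$-block-encoding of $p(\mat A)$.

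For the resource count I would use that $V=(I\otimes U_{\mat A})\big(\sum_{i=0}^{t-1}\ketbra ii\otimes W^i\big)$, so it remains to implement $\sum_i\ketbra ii\otimes W^i$. Writing $i$ in binary and applying $W^{2^k}$ controlled on counter-qubit $k$ for $k=0,\dots,l-1$ (equivalently, repeatedly applying $W$ conditioned on the counter being nonzero while decrementing it) uses $O(t)$ controlled invocations of $W$, hence $O(t)$ controlled calls to $U_{\mat A}$ and to $U_{\mat A}^\dag$, plus the single uncontrolled $U_{\mat A}$ and one call each to $U_{\sqrt{\vec c}}$ and $U_{\sqrt{\vec c}}^\dag$; bundling each $W$ into a single call is what yields the $t+1$ figure in the statement. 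The remaining gates are the $l=O(\log t)$ counter qubits with their increment/decrement arithmetic and the reflections $2\Pi-I$, all of size $\polylog(t)$, so the gate count is dominated by the $O(t)$ block-encoding invocations. Finally, since in practice $U_{\mat A}$ is only an approximate block-encoding (a block-encoding synthesised from sparse access cannot be exact), I would propagate errors by telescoping: $O(t)$ uses of an $\eps'$-approximate block-encoding give an $O(t\eps')$ error in $p(\mat A)$, so $\eps'=\eps/\poly(t\kappa)$ suffices and costs only an extra $\polylog(t\kappa/\eps)$ depth, together with a $\polylog(n)$ factor for synthesising $U_{\mat A}$ itself, giving the claimed $O(t\,\polylog(nt\kappa/\eps))$.

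The work here is bookkeeping rather than conceptual; the only places needing care are pinning down the exact constant in the number of block-encoding calls (binary-counter versus decrement-and-apply, whether the counter must be uncomputed, the off-by-one from the trailing $U_{\mat A}$) and making the approximate-block-encoding error analysis precise. Both are carried out in \cite{CKS17}*{Theorem~4} for their particular polynomial, and since that argument never uses the coefficients $c_i$ beyond $\norm{\vec c}_1$, it transfers verbatim to any $p$ of the form \eqref{eq: linear comb}.
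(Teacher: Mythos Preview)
Your proposal is correct and follows exactly the construction the paper itself describes in the two paragraphs preceding the theorem (the paper gives no separate proof, simply stating that \cite{CKS17}*{Theorem~4} ``naturally extends to arbitrary polynomials of the form~\eqref{eq: linear comb}''). Your sketch in fact supplies more detail than the paper does---the LCU sandwich $U_{\sqrt{\vec c}}^\dag\, V\, U_{\sqrt{\vec c}}$, the factorisation $V=(I\otimes U_{\mat A})\sum_i\ketbra ii\otimes W^i$, and the binary-controlled / decrement implementation of $\sum_i\ketbra ii\otimes W^i$ are precisely what the paper outlines. One small quibble: your justification that $\norm{\vec c}_1\ge 1$ invokes the assumption that $p$ approximates $1/x$, which is not part of the theorem's hypotheses; but this inequality is not needed for the block-encoding claim, so you can simply drop that sentence.
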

Compared to the QSVT approach, for this circuit we only need to compute the Chebyshev coefficients $\vec{c}$, as opposed to the vector of angles $\vec{\Phi}$ -- this comes, however, at the cost of using $O(\log t)$ additional qubits. Moreover, the coefficient 1-norm $\norm{\vec{c}}_1$ represents an upper bound for $\norm{p}_\infty$, since
\begin{equation}\label{eq:coeff norm is an upper bound for p on the interval}
    |p(x)| = \left| \sum_{i=0}^t c_i \cheb_i(x) \right| \leq \sum_{i=0}^t |c_i| \cdot |\cheb_i(x)| \leq \norm{\vec{c}}_1,\; \text{ for } |x| \leq 1.
\end{equation}
A natural question is how tight this bound is for general degree-$t$ polynomials $p$ with $\norm{p}_\infty \leq 1$. By norm conversion (\cref{eq:cheb ineq 1-norm} in particular), the ratio $\norm{\vec{c}}_1 / \norm{p}_\infty$ is provably upper bounded by $O(\sqrt{t})$ but in \cref{app:Good example functions for LCU} we observe that for many ``interesting'' functions the ratio $\norm{\vec{c}}_1 / \norm{p}_\infty$ is in fact only $O(\log(t))$.
A notable exception is the complex exponential $e^{\i \kappa x}$ (and thus $\sin(\kappa x)$ and $\cos(\kappa x)$) for which numerical experiments suggest that it attains the $O(\sqrt{t})$ upper bound.

\section{Convex optimization perspective}\label{sec:Convex optimization}
In this section we introduce the convex optimization approach to linear system solving, and reinterpret the CKS polynomial in this framework. Let us first assume that $\mat A$ is positive definite (PD). We start by defining the convex function $f:\R^n \to \R$ 
as
\[
f(x) := \frac{\vec{x}^\top \mat A \vec{x}}{2} - \vec{b}^\top \vec{x}.
\]
Note that $\nabla f(\vec{x}) = \mat A\vec{x} - \vec{b}$, so the minimizer of $f$ satisfies $\mat A \vec{x} =\vec{b}$. This observation forms the basis of the convex optimization approach to linear system solving. We refer the reader to, for example,~\cite{Polyak1987Optimization,Vishnoi2013Laplacian} for an overview of gradient descent type algorithms for solving linear systems. 

\subsection{Gradient descent}\label{sec:Gradient descent}
One of the most well-known algorithms for minimizing a convex function $f$ is the family of gradient descent algorithms. Starting from an initial point $\vec{x}_1$ (we use 1-based indexing on purpose), such an algorithm performs the iterations
\begin{equation} \label{eq:GD}
\vec{x}_t = \vec{x}_{t-1} - \eta_t \nabla f(\vec{x}_{t-1}), \qquad \text{where }t = 2,3,\ldots
\end{equation}
where $\eta_t \in [0,\infty)$ is the `step size' in the $t$-th iteration. For the most basic version of gradient descent we take a constant step size, i.e., $\eta_t$ is independent of $t$. 

For our quadratic function $f$ we can unpack this recurrence. As we have seen before $\nabla f(\vec{x}_{t-1}) = \mat A\vec{x}_{t-1} - \vec{b}$ and hence 
\[
\vec{x}_{t} = (I-\eta_t \mat A) \vec{x}_{t-1} + \eta_t \vec{b}.
\]
If we set $\eta_t := 1$ for all $t \in \N$ and use the initial point $\vec{x}_1 = \vec{b}$ (or even $\vec{x}_0=\vec{0}$, if we allow empty sums), then we obtain 
\[
\vec{x}_t = \sum_{k=0}^{t-1} (I-\mat A)^k \vec{b}.
\]
Let us define the polynomial $p_t^+(\lambda) = \sum_{k=0}^{t-1} (1-\lambda)^k$ so that $\vec{x}_t = p_t^+(\mat A)\vec{b}$. Observe that this is the degree-$(t-1)$ Taylor expansion of the function $1/\lambda$ around $1$. 
\begin{lem} \label{lem:pt}
We have $|p_t^+(\lambda) - 1/\lambda| \leq \eps$ for all $\lambda \in [1/\kappa,1]$ whenever $t \geq \kappa \log(\kappa/\eps)$. 
\end{lem}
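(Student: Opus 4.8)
The plan is to first recognize $p_t^+$ as a truncated geometric series and get a closed form for the error. Since $p_t^+(\lambda) = \sum_{k=0}^{t-1}(1-\lambda)^k$ is a geometric sum with ratio $1-\lambda$, for $\lambda \neq 0$ we have
\[
    p_t^+(\lambda) = \frac{1 - (1-\lambda)^t}{1-(1-\lambda)} = \frac{1 - (1-\lambda)^t}{\lambda},
\]
and therefore
\[
    p_t^+(\lambda) - \frac1\lambda = -\frac{(1-\lambda)^t}{\lambda}, \qquad \text{so} \qquad \left| p_t^+(\lambda) - \frac1\lambda \right| = \frac{(1-\lambda)^t}{\lambda}
\]
using that $1-\lambda \geq 0$ on the relevant domain.

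Next I would bound the two factors separately on $\lambda \in [1/\kappa,1]$. On this interval $0 \leq 1-\lambda \leq 1 - 1/\kappa$, hence $(1-\lambda)^t \leq (1-1/\kappa)^t$, and $1/\lambda \leq \kappa$. Combining, and applying the elementary inequality $1 - x \leq e^{-x}$ with $x = 1/\kappa$,
\[
    \left| p_t^+(\lambda) - \frac1\lambda \right| \leq \kappa \left( 1 - \frac1\kappa \right)^{t} \leq \kappa\, e^{-t/\kappa}.
\]

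Finally I would solve for $t$: the right-hand side is at most $\eps$ precisely when $e^{-t/\kappa} \leq \eps/\kappa$, i.e.\ when $t \geq \kappa \log(\kappa/\eps)$, which is the claimed bound. (The endpoint $\lambda = 1$ is harmless: there the error is exactly $0$.) I do not expect any real obstacle here — the only thing to be a little careful about is justifying the sign manipulations, i.e.\ that $1-\lambda \geq 0$ throughout $[1/\kappa,1]$, so that $(1-\lambda)^t$ is nonnegative and the geometric-series formula and the monotonicity step $(1-\lambda)^t \le (1-1/\kappa)^t$ both apply cleanly.
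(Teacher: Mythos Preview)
Your proof is correct and essentially identical to the paper's own argument: both derive the closed form $p_t^+(\lambda)=\frac{1-(1-\lambda)^t}{\lambda}$, write the error as $\kappa(1-1/\kappa)^t$ in the worst case, and then use $1-1/\kappa\le e^{-1/\kappa}$ to reach the bound $t\ge \kappa\log(\kappa/\eps)$.
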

\begin{proof}
Indeed, substituting $\delta = 1-\lambda$ we have $(1-\delta) \cdot p_t^+(1-\delta) = (1-\delta) \sum_{k=0}^{t-1} \delta^k = 1-\delta^t$ which shows that 
\[
p_t^+(\lambda) = \frac{1-(1-\lambda)^t}\lambda.
\]
Therefore, for $\lambda \in [1/\kappa,1]$ and $t \geq \kappa \log(1/\eps)$ we have
\[
|p_t^+(\lambda) - 1/\lambda| = |1/\lambda|\cdot|1-\lambda|^t \leq \kappa (1-1/\kappa)^t \leq \kappa e^{-\log(\kappa/\epsilon)} = \epsilon. \qedhere
\]
\end{proof}

\subsection{Chebyshev iteration} \label{sec:Chebyshev iteration}
In the previous section we saw that $t$-step iterative methods are roughly equivalent to degree-$(t-1)$ polynomials that approximate the function $1/x$ on $[1/\kappa, 1]$. Thus, the natural question to ask is what is the best such polynomial $q_t^+$? Here we use the notion of optimality that comes from \cref{def:QLSP}. In other words, what is the  degree-$(t-1)$ polynomial $q_t^+$ that minimizes 
\begin{equation}\label{eq:sense of optimality}
    \max_{x \in [1/\kappa, 1]} |xq_t^+(x) - 1|.
\end{equation}
First observe that all such polynomials can be expressed in the form $q_t^+(x) = \frac{1 - r_t^+(x)}{x}$ where $r_t^+$ is a degree-$t$ polynomial that satisfies $r_t^+(0) = 1$.\footnote{For example, in the case of gradient descent we have $r_t^+(x) = (1-x)^t$.} Thus, our goal is to find a degree-$t$ polynomial $r_t^+(x)$ that has the smallest absolute value on the interval $[1/\kappa, 1]$ and satisfies the normalization constraint $r_t^+(0)=1$. It turns out that we can use extremal properties of the Chebyshev polynomials $\cheb_t(x)$ to determine an optimal $r_t^+(x)$. We use the following well-known result (cf.~\cite[Prop.~2.4]{Vishnoi2013Approximation}). 
\begin{lem} \label{thm: Chebyshev extremality}
    For any degree-$t$ polynomial $p(x)$ such that $|p(x)| \leq 1$ for all $x \in [-1, 1]$, and any $y$ such that $|y| > 1$, we have $|p(y)| \leq |\cheb_t(y)|$.
\end{lem}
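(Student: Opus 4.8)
The plan is to establish this extremal property of the Chebyshev polynomials by a standard interpolation-and-degree argument. Suppose for contradiction that there is a degree-$t$ polynomial $p$ with $|p(x)| \leq 1$ on $[-1,1]$ and some point $y$ with $|y| > 1$ satisfying $|p(y)| > |\cheb_t(y)|$. The key observation is that $\cheb_t$ \emph{equioscillates}: there are $t+1$ points $-1 \leq x_0 < x_1 < \cdots < x_t \leq 1$ (namely $x_j = \cos(j\pi/t)$) at which $\cheb_t(x_j) = (-1)^j$, alternating between $+1$ and $-1$.

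First I would form the scaled difference $g(x) := \cheb_t(y) \cdot p(x) - p(y) \cdot \cheb_t(x)$ (note $\cheb_t(y) \neq 0$ since $|y| > 1$ and $\cheb_t$ has all its roots in $(-1,1)$). This is a polynomial of degree at most $t$, and by construction $g(y) = 0$. Next I would track the sign of $g$ at the $t+1$ equioscillation points $x_j$: there $g(x_j) = \cheb_t(y) p(x_j) - p(y)(-1)^j$. Since $|p(y)| > |\cheb_t(y)| \geq |\cheb_t(y) p(x_j)|$ (using $|p(x_j)| \leq 1$), the term $-p(y)(-1)^j$ dominates, so $\sign(g(x_j))$ equals $\sign(-p(y)(-1)^j)$, which alternates with $j$. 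Hence $g$ changes sign at least $t$ times on $[x_0, x_t] \subseteq [-1,1]$, giving at least $t$ roots in that interval, plus the additional root at $y \notin [-1,1]$ — a total of at least $t+1$ distinct roots of a polynomial of degree at most $t$. Therefore $g \equiv 0$, so $p = \frac{p(y)}{\cheb_t(y)} \cheb_t$; but then $|p(x_0)| = |p(y)/\cheb_t(y)| > 1$, contradicting $|p(x)| \leq 1$ on $[-1,1]$.

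The main obstacle to watch is the borderline case where $|p(x_j)| = 1$ exactly for some $j$, which could make $g(x_j) = 0$ rather than strictly sign-definite, weakening the sign-change count. I would handle this by a careful accounting: a zero at an interior $x_j$ still contributes (with multiplicity) to the root count, or alternatively one can first argue that the strict inequality $|p(y)| > |\cheb_t(y)|$ forces $|\cheb_t(y) p(x_j)| < |p(y)|$ \emph{strictly} regardless, so $g(x_j) \neq 0$ and the sign genuinely alternates. One should also note the edge cases $t = 0$ (trivial, $\cheb_0 \equiv 1$) and confirm $\cheb_t(y) \neq 0$ for $|y| > 1$, which follows from the fact that all $t$ roots of $\cheb_t$ lie in $(-1,1)$. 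Modulo these routine checks, the argument is short and self-contained.
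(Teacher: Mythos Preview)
Your argument is correct and is the classical equioscillation/root-counting proof of this extremal property of Chebyshev polynomials. The paper does not actually give its own proof of this lemma: it is stated as a well-known result with a reference (Proposition~2.4 in Vishnoi's survey), so there is nothing in the paper to compare against. Your proof is precisely the standard one that such references contain.
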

Using the affine transformation $x \mapsto \frac{1+1/\kappa-2x}{1-1/\kappa}$ this gives the following corollary:
\begin{cor} \label{cor: cheb functions}
    Let $\kappa > 1$ be real, and let $t > 0$ be an integer. Then, the polynomial 
    \[
        r_t^+(x) = \left. \cheb_t\left(\frac{1+1/\kappa-2x}{1-1/\kappa}\right) \middle/ \cheb_t\left(\frac{1+1/\kappa}{1-1/\kappa}\right) \right.
    \]
    is a degree-$t$ polynomial that satisfies $r_t^+(0) = 1$, and minimizes the quantity $\max_{x \in [1, 1/\kappa]} |r_t^+(x)|$.
\end{cor}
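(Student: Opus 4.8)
The plan is to reduce the statement to \cref{thm: Chebyshev extremality} via the indicated affine change of variables. Write $\sigma(x) := \frac{1+1/\kappa-2x}{1-1/\kappa}$. First I would record the elementary facts about $\sigma$: it is affine with negative slope, and the substitutions $\sigma(1/\kappa)=1$, $\sigma(1)=-1$ show it is a decreasing bijection from $[1/\kappa,1]$ onto $[-1,1]$. Evaluating at the one other relevant point, $\sigma(0)=\frac{1+1/\kappa}{1-1/\kappa}=:y_\kappa$, and since $\kappa>1$ we have $y_\kappa>1$. In particular $\cheb_t(y_\kappa)\neq 0$ (indeed $\cheb_t(y_\kappa)\geq 1>0$, either because $\cheb_t$ is increasing on $[1,\infty)$ with $\cheb_t(1)=1$, or by applying \cref{thm: Chebyshev extremality} to the constant polynomial $1$), so $r_t^+(x)=\cheb_t(\sigma(x))/\cheb_t(y_\kappa)$ is a well-defined polynomial; it has degree exactly $t$, being the degree-$t$ polynomial $\cheb_t$ composed with the degree-$1$ map $\sigma$ and rescaled. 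The normalization is then immediate: $r_t^+(0)=\cheb_t(\sigma(0))/\cheb_t(y_\kappa)=\cheb_t(y_\kappa)/\cheb_t(y_\kappa)=1$.

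Next I would compute the objective value attained by $r_t^+$. Since $\sigma$ maps $[1/\kappa,1]$ onto $[-1,1]$ and $|\cheb_t(u)|\leq 1$ on $[-1,1]$ with equality at $u=1$,
\[
    \max_{x\in[1/\kappa,1]}|r_t^+(x)|=\frac{1}{\cheb_t(y_\kappa)}\max_{u\in[-1,1]}|\cheb_t(u)|=\frac{1}{\cheb_t(y_\kappa)}.
\]
It then remains to prove the matching lower bound: every degree-$t$ polynomial $p$ with $p(0)=1$ satisfies $\max_{x\in[1/\kappa,1]}|p(x)|\geq 1/\cheb_t(y_\kappa)$. I would prove this by passing through $\sigma$ in the reverse direction. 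Given such a $p$, set $q(u):=p(\sigma^{-1}(u))$, a polynomial of degree at most $t$, and let $M:=\max_{u\in[-1,1]}|q(u)|=\max_{x\in[1/\kappa,1]}|p(x)|$. Because $\sigma(0)=y_\kappa$ we have $q(y_\kappa)=p(0)=1$, so $q$ is not identically zero and $M>0$. Applying \cref{thm: Chebyshev extremality} to $q/M$ (which has sup-norm at most $1$ on $[-1,1]$) at the point $y=y_\kappa$, with $|y_\kappa|>1$, gives $|q(y_\kappa)|/M\leq \cheb_t(y_\kappa)$, i.e. $M\geq 1/\cheb_t(y_\kappa)$. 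This is exactly the value achieved by $r_t^+$, so $r_t^+$ is optimal.

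The only point needing a little care — the ``main obstacle'', such as it is — is the degree bookkeeping: \cref{thm: Chebyshev extremality} is phrased for a polynomial of degree $t$, whereas $q$ might have degree $s<t$. This is harmless because $|\cheb_s(y)|\leq|\cheb_t(y)|$ for all $|y|\geq 1$, so the conclusion of the lemma still holds with $\cheb_t$ in place of $\cheb_s$; alternatively one simply observes that $p$ of degree $<t$ is in particular a degree-$t$ polynomial. Everything else is a routine substitution — chiefly double-checking the endpoint evaluations of $\sigma$ and that it is orientation-reversing, so that it genuinely bijects the two closed intervals and preserves the relevant maxima.
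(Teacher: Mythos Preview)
Your proof is correct and follows exactly the approach the paper indicates: the paper simply states that the corollary follows from \cref{thm: Chebyshev extremality} via the affine transformation $x \mapsto \frac{1+1/\kappa-2x}{1-1/\kappa}$, and you have carefully filled in all the omitted details (the bijection of intervals, the normalization check, and the extremality argument including the degree bookkeeping).
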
 
Note that the polynomials $r_t^+$ satisfy a Chebyshev-like $3$-term recurrence. As a consequence, the polynomials $q_t^+(x) = (1-r_t^+(x))/x$ also satisfy such a recurrence. The corresponding iterative method is known as the Chebyshev iteration.
\begin{rem}[Chebyshev iteration] The polynomial $q_t^+(x)$ satisfies the recurrence
\begin{equation} \label{eq:qt recurrence}
    q_{t+1}^+(x) = 2 \frac{\cheb_t(\gamma)}{\cheb_{t+1}(\gamma)} \frac{\kappa + 1 - 2\kappa x}{\kappa - 1} q_t^+(x) - \frac{\cheb_{t-1}(\gamma)}{\cheb_{t+1}(\gamma)}q_{t-1}^+(x) - \frac{4\kappa}{\kappa - 1}\frac{\cheb_t(\gamma)}{\cheb_{t+1}(\gamma)},
\end{equation}
where $\gamma = \frac{1+1/\kappa}{1-1/\kappa}$.
This recurrence corresponds to the iterative method $\vec{x}_1 = \vec{b}$ and 
\[
    \vec{x}_{t+1} = 2 \frac{\cheb_t(\gamma)}{\cheb_{t+1}(\gamma)} \frac{(\kappa + 1)I - 2\kappa \mat A}{\kappa - 1} \vec{x}_t - \frac{\cheb_{t-1}(\gamma)}{\cheb_{t+1}(\gamma)}\vec{x}_{t-1} - \frac{4\kappa}{\kappa - 1}\frac{\cheb_t(\gamma)}{\cheb_{t+1}(\gamma)}\vec{b}.
\]
\end{rem}
The convergence rate of this method is summarized by the following theorem:
\begin{thm} \label{thm: cheb approx pos}
Let $\kappa > 1$ and $\epsilon > 0$. Then, for $t \geq \frac12 \sqrt{\kappa} \log(2\kappa / \epsilon)$ we have
\[
    \left\lvert q_t^+(x) - 1/x \right \rvert \leq \epsilon \text{ for all } x \in [1/\kappa, 1].
\]
\end{thm}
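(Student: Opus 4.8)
The plan is to reduce everything to a lower bound on a single Chebyshev polynomial evaluated outside $[-1,1]$. Since $q_t^+(x) = (1-r_t^+(x))/x$ by construction, for $x \in [1/\kappa,1]$ we immediately get
\[
    \left\lvert q_t^+(x) - \frac1x \right\rvert = \frac{|r_t^+(x)|}{|x|} \leq \kappa\, |r_t^+(x)|,
\]
so it suffices to show $|r_t^+(x)| \leq \eps/\kappa$ on $[1/\kappa,1]$. The first step is to observe that the affine map $x \mapsto \frac{1+1/\kappa-2x}{1-1/\kappa}$ sends the endpoints $x=1/\kappa$ and $x=1$ to $+1$ and $-1$ respectively, hence maps $[1/\kappa,1]$ onto $[-1,1]$. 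By the property $|\cheb_t|\leq 1$ on $[-1,1]$, the numerator in the definition of $r_t^+$ is bounded by $1$ in absolute value for all $x\in[1/\kappa,1]$, so $|r_t^+(x)| \leq 1/\cheb_t(\gamma)$ with $\gamma = \frac{1+1/\kappa}{1-1/\kappa} = \frac{\kappa+1}{\kappa-1} > 1$.

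The second step is to lower bound $\cheb_t(\gamma)$. I would use the closed form $\cheb_t(y) = \frac12\big((y+\sqrt{y^2-1})^t + (y-\sqrt{y^2-1})^t\big) \geq \frac12 (y+\sqrt{y^2-1})^t$ valid for $y\geq 1$, and then simplify the base at $y=\gamma$. A short computation gives $\gamma^2-1 = \frac{4\kappa}{(\kappa-1)^2}$, so $\sqrt{\gamma^2-1} = \frac{2\sqrt\kappa}{\kappa-1}$ and therefore
\[
    \gamma + \sqrt{\gamma^2-1} = \frac{\kappa+1+2\sqrt\kappa}{\kappa-1} = \frac{(\sqrt\kappa+1)^2}{(\sqrt\kappa-1)(\sqrt\kappa+1)} = \frac{\sqrt\kappa+1}{\sqrt\kappa-1}.
\]
Thus $\cheb_t(\gamma) \geq \frac12\big(\tfrac{\sqrt\kappa+1}{\sqrt\kappa-1}\big)^t$.

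The third step converts this into an exponential rate: writing $u = 1/\sqrt\kappa\in(0,1)$ and using $\log\frac{1+u}{1-u} = 2\sum_{k\geq 0} \frac{u^{2k+1}}{2k+1} \geq 2u$, we get $\frac{\sqrt\kappa+1}{\sqrt\kappa-1} \geq e^{2/\sqrt\kappa}$, hence $\cheb_t(\gamma) \geq \frac12 e^{2t/\sqrt\kappa}$. Combining the three steps,
\[
    \left\lvert q_t^+(x) - \frac1x \right\rvert \leq \frac{\kappa}{\cheb_t(\gamma)} \leq 2\kappa\, e^{-2t/\sqrt\kappa},
\]
which is at most $\eps$ precisely when $2t/\sqrt\kappa \geq \log(2\kappa/\eps)$, i.e.\ when $t \geq \frac12\sqrt\kappa \log(2\kappa/\eps)$, as claimed. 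None of the steps is genuinely hard; the only place requiring care is the algebraic simplification of $\gamma+\sqrt{\gamma^2-1}$ and the elementary inequality $\frac{1+u}{1-u}\geq e^{2u}$, which is where the quadratic improvement over the Taylor-based bound of \cref{lem:pt} (scaling like $\sqrt\kappa$ rather than $\kappa$) comes from. Alternatively, one can substitute $\gamma = \cosh\theta$ and use $\cheb_t(\cosh\theta) = \cosh(t\theta) \geq \frac12 e^{t\theta}$ together with $\theta = \log\frac{\sqrt\kappa+1}{\sqrt\kappa-1} \geq 2/\sqrt\kappa$, which gives the same bound.
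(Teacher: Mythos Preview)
Your proof is correct and follows essentially the same route as the paper: reduce to bounding $\kappa/\cheb_t(\gamma)$, use the closed form of $\cheb_t$ for arguments $\geq 1$ to get $\cheb_t(\gamma)\geq \tfrac12\big(\tfrac{\sqrt\kappa+1}{\sqrt\kappa-1}\big)^t$, and then convert the base to an exponential. The only cosmetic difference is the final elementary inequality---you use $\tfrac{1+u}{1-u}\geq e^{2u}$ from the Taylor series of $\log\tfrac{1+u}{1-u}$, while the paper uses $(1+x/n)^{n+x/2}\geq e^{x}$; both yield exactly the same threshold $t\geq \tfrac12\sqrt\kappa\log(2\kappa/\eps)$.
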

\begin{proof}
    First, we define $s(x) = \frac{1+1/\kappa-2x}{1-1/\kappa}$, so that we have $r_t^+(x) = \cheb_t(s(x))/\cheb_t(s(0))$. Thus, for all $x \in [1/\kappa, 1]$, we have 
    \[
        |q_t^+(x) - 1/x| = |r_t^+(x) / x| \leq \kappa |r_t^+(x)| = \kappa \left\lvert \cheb_t(s(x))/\cheb_t(s(0)) \right \rvert.
    \]
    Additionally, since $|s(x)| \leq 1$ on this interval, we also have $|\cheb_t(s(x))|\leq 1$. Thus, it suffices to find $t$ for which $\cheb_t(s(0)) = \cheb_t(1+ \frac{2}{\kappa - 1}) \geq \frac{\kappa}{\epsilon}$. Since the Chebyshev polynomial $\cheb_t(\cdot)$ can be computed as
    \begin{equation}\label{eq:explicit expression for Tn for big x}
        \cheb_t(x) = \frac12 \left( \left( x - \sqrt{x^2 - 1} \right)^t + \left( x + \sqrt{x^2 - 1} \right)^t \right) \text{ for } |x| \geq 1,
    \end{equation}
    we can conclude that $\cheb_t(s(0)) = \frac12 \left( \left(\frac{\sqrt\kappa - 1}{\sqrt\kappa+1}\right)^t + \left(\frac{\sqrt\kappa+1}{\sqrt\kappa-1}\right)^t \right) \geq \frac12 \left(\frac{\sqrt\kappa+1}{\sqrt\kappa-1}\right)^t$. Using the inequality $(1+\frac{x}{n})^{n+x/2} \geq e^x$ for $x,n \geq 0$, after substituting $t = \frac12 \sqrt{\kappa} \log(2\kappa / \epsilon)$ we have
    \begin{align*}
        \cheb_t(s(0)) &\geq \frac12\left(\frac{\sqrt\kappa+1}{\sqrt\kappa-1}\right)^t = \frac12\left( 1 + \frac{2}{\sqrt{\kappa} - 1} \right)^{(\sqrt{\kappa}-1 + 2/2)\frac{\log(2\kappa/\epsilon)}{2}} \\
        &\geq \frac12 \exp(\log(2\kappa/\epsilon)) = \frac{\kappa}{\epsilon}. \qedhere
    \end{align*}
\end{proof}

\subsection{The general case}

We now return to the setting where $\mat A$ is a Hermitian matrix and has eigenvalues in the domain $D_\kappa = [-1,-1/\kappa] \cup [1/\kappa,1]$. One can still solve such systems using gradient descent methods by reducing to the convex case. That is, by considering the equivalent linear system $\mat A ^2 \vec{x} = \mat A \vec{b}$ and the corresponding convex function $f(\vec{x}) = \frac12 \vec{x}^\top \mat A^2 \vec{x} - \vec{b}^\top \mat A \vec{x}$. In particular, this allows us to solve $\mat A\vec x = \vec b$ by using a method for solving PD systems applied to the system $\mat A ^2 \vec{x} = \mat A \vec{b}$. 
\begin{cor}\label{cor:reducing general linear systems to psd}
    Let $\eps > 0$, $\kappa > 1$, and let $P_t$ be any degree-$(t-1)$ polynomial such that $|P_t(\lambda) - 1/\lambda| \leq \eps$ for all $\lambda \in [1/\kappa^2, 1]$. Then, $|P_t(\mu^2)\mu - 1/\mu| \leq \epsilon$ for all $\mu \in D_\kappa$. 
\end{cor}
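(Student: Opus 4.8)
The statement is essentially a change-of-variables observation, so the plan is short. Fix an arbitrary $\mu \in D_\kappa$, i.e.\ $\mu \in [-1,-1/\kappa] \cup [1/\kappa,1]$, and set $\lambda := \mu^2$. The first step is to check that this $\lambda$ lies in the domain on which $P_t$ is assumed to be a good approximation of the inverse: since $1/\kappa \leq |\mu| \leq 1$ we get $1/\kappa^2 \leq \lambda \leq 1$, so $\lambda \in [1/\kappa^2,1]$ and the hypothesis gives $|P_t(\lambda) - 1/\lambda| \leq \eps$, that is, $|P_t(\mu^2) - 1/\mu^2| \leq \eps$.

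The second step is to pull out a factor of $\mu$. Write
\[
    P_t(\mu^2)\mu - \frac{1}{\mu} = \mu\left( P_t(\mu^2) - \frac{1}{\mu^2} \right),
\]
which is valid since $\mu \neq 0$. Taking absolute values and using $|\mu| \leq 1$ together with the bound from the first step yields
\[
    \left| P_t(\mu^2)\mu - \frac{1}{\mu} \right| = |\mu|\cdot\left| P_t(\mu^2) - \frac{1}{\mu^2} \right| \leq |\mu|\,\eps \leq \eps,
\]
which is exactly the claim. As $\mu \in D_\kappa$ was arbitrary, this completes the argument.

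There is no real obstacle here; the only things worth flagging explicitly are (i) that $\mu^2$ never leaves $[1/\kappa^2,1]$ when $\mu \in D_\kappa$ (this is where the squaring of the condition number $\kappa \mapsto \kappa^2$ enters, matching the reduction $\mat A\vec x = \vec b \mapsto \mat A^2\vec x = \mat A\vec b$ from the surrounding text), and (ii) that the relevant approximant of $1/\mu$ on $D_\kappa$ is the \emph{odd} polynomial $\mu \mapsto P_t(\mu^2)\mu$ of degree $2t-1$, so that when this is instantiated at $\mat A$ it produces $P_t(\mat A^2)\mat A$, i.e.\ the quantity one would apply to $\vec b$ after the reduction. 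Nothing beyond $|\mu|\le 1$ and the triangle inequality is needed.
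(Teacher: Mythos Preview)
Your proof is correct and follows exactly the paper's approach: factor $P_t(\mu^2)\mu - 1/\mu = \mu(P_t(\mu^2) - 1/\mu^2)$ and use $|\mu|\leq 1$ together with the hypothesis applied at $\lambda = \mu^2 \in [1/\kappa^2,1]$. The paper compresses this into a single line, but the content is identical.
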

\begin{proof}
    Since $|\mu| \leq 1$ on $D_\kappa$, we have
$|P_t(\mu^2)\mu - 1/\mu| = |\mu|\cdot|P_t(\mu^2) - 1/\mu^2| \leq \eps$.
\end{proof}
We define the following two polynomials as the respective analogs of $p_t^+$ and $q_t^+$ for $D_\kappa$:
\begin{align}
    p_t(x) &= x p_t^+(x^2) = \frac{1 - (1-x^2)^t}{x}, \text{ and}\\
    q_t(x) &= x q_t^+(x^2) = \frac{1 - \cheb_t(\frac{1+1/\kappa^2 - 2x^2}{1-1/\kappa^2}) / \cheb_t(\frac{1+\kappa^2}{1-\kappa^2})}{x}. \label{eq: def qt}
\end{align}
Both $p_t$ and $q_t$ are degree-$(2t-1)$ polynomials, but different values of $t$ are required in order to achieve an $\eps$-approximation of $1/x$ on $D_\kappa$. In particular, the following degrees are required:
\begin{cor}\label{cor:degrees of pt and qt}
    Let $\kappa > 1$ and $\eps > 0$. Then, 
    \begin{enumerate}
        \item $|p_t(x) - 1/x| \leq \eps$ for all $x \in D_\kappa$ whenever $t \geq \kappa^2 \log(\kappa^2 / \eps)$,
        \item $|q_t(x) - 1/x| \leq \eps$ for all $x \in D_\kappa$ whenever $t \geq \frac12 \kappa \log(2\kappa^2 / \eps)$.
    \end{enumerate}
\end{cor}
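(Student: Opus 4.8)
The plan is to reduce everything to the positive-definite results already established, namely \cref{lem:pt} and \cref{thm: cheb approx pos}, via \cref{cor:reducing general linear systems to psd}. The two polynomials $p_t$ and $q_t$ are by definition obtained from $p_t^+$ and $q_t^+$ by the substitution $x \mapsto x p_\bullet^+(x^2)$, so the estimate $|p_\bullet(x) - 1/x| \le \eps$ on $D_\kappa$ follows immediately from $|p_\bullet^+(\lambda) - 1/\lambda| \le \eps$ on $[1/\kappa^2, 1]$, which is the content of \cref{cor:reducing general linear systems to psd}. Hence the whole statement comes down to reading off the degree thresholds from the PD lemmas with $\kappa$ replaced by $\kappa^2$.

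For part (1): \cref{cor:reducing general linear systems to psd} tells us that it suffices to have $|p_t^+(\lambda) - 1/\lambda| \le \eps$ on $[1/\kappa^2, 1]$. Applying \cref{lem:pt} with its condition number set to $\kappa^2$ (legitimate since $p_t^+$ is exactly the gradient-descent polynomial $\sum_{k=0}^{t-1}(1-\lambda)^k$, whose analysis in \cref{lem:pt} is valid for any interval $[1/K,1]$), we get that $t \ge \kappa^2 \log(\kappa^2/\eps)$ suffices. I would just note that the ``$\kappa$'' appearing in \cref{lem:pt} is a dummy for the reciprocal of the left endpoint of the interval, so instantiating it at $\kappa^2$ gives precisely the claimed bound $t \ge \kappa^2\log(\kappa^2/\eps)$.

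For part (2): again by \cref{cor:reducing general linear systems to psd} it suffices that $|q_t^+(\lambda) - 1/\lambda| \le \eps$ on $[1/\kappa^2, 1]$, where now $q_t^+$ is the Chebyshev-iteration polynomial built from $\kappa^2$ (this is consistent with the definition of $q_t$ in \eqref{eq: def qt}, whose inner affine map uses $1/\kappa^2$). \cref{thm: cheb approx pos}, applied with its condition number equal to $\kappa^2$, then says $t \ge \frac12\sqrt{\kappa^2}\log(2\kappa^2/\eps) = \frac12\kappa\log(2\kappa^2/\eps)$ suffices, which is exactly the claimed threshold. So the proof is a two-line invocation of the earlier results after the substitution $\kappa \leadsto \kappa^2$.

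The only thing requiring a moment of care — and the closest thing to an ``obstacle'' — is bookkeeping the degrees: $p_t$ and $q_t$ are degree-$(2t-1)$ polynomials in $x$, whereas $p_t^+$ and $q_t^+$ are degree-$(t-1)$ and degree-$t$ polynomials respectively in $\lambda = x^2$, and the parameter $t$ in \cref{lem:pt}/\cref{thm: cheb approx pos} counts iterations, not degree. I would double-check that the ``$t$'' in the corollary statement matches the ``$t$'' in $p_t^+$, $q_t^+$ (it does, by the definitions $p_t(x) = x p_t^+(x^2)$ and $q_t(x) = x q_t^+(x^2)$), so that no off-by-one or factor-of-two creeps in when I quote the thresholds. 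Beyond that, there is no real content: the corollary is a direct translation of \cref{cor:reducing general linear systems to psd} together with \cref{lem:pt} and \cref{thm: cheb approx pos}.
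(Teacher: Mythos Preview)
Your proposal is correct and is exactly the argument the paper has in mind: the corollary is stated without proof precisely because it is meant to be read off from \cref{cor:reducing general linear systems to psd} together with \cref{lem:pt} and \cref{thm: cheb approx pos} after replacing $\kappa$ by $\kappa^2$. Your bookkeeping about which $\kappa$ enters into $q_t^+$ (the one built from $\kappa^2$, consistent with \eqref{eq: def qt}) is the only point that needed attention, and you handled it correctly.
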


\begin{lem}\label{cor:chebyshev iteration is optimal for general matrices}
    Let $t \in \N$ and $\kappa > 1$. The polynomial $q_t$ is a degree-$(2t-1)$ polynomial that minimizes the quantity $\max_{x \in D_\kappa} |x P(x) - 1|$ among all degree-$(2t-1)$ polynomials $P \in \R[x]$. 
\end{lem}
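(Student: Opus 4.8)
The plan is to reduce optimality over the symmetric domain $D_\kappa$ to the positive-definite problem over $[1/\kappa^2,1]$, where the extremal polynomial $q_t^+$ has already been identified via \cref{cor: cheb functions}. Fix an arbitrary degree-$(2t-1)$ polynomial $P\in\R[x]$ and write $g(x):=xP(x)$, a polynomial of degree at most $2t$ with $g(0)=0$; the goal is to show $\max_{x\in D_\kappa}|g(x)-1|\ge\max_{x\in D_\kappa}|x\,q_t(x)-1|$.

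The first step is symmetrization. Since $D_\kappa$ is invariant under $x\mapsto -x$, the even part $g_{\mathrm e}(x):=\tfrac12\bigl(g(x)+g(-x)\bigr)$ satisfies $|g_{\mathrm e}(x)-1|\le\tfrac12\bigl(|g(x)-1|+|g(-x)-1|\bigr)\le\max_{y\in D_\kappa}|g(y)-1|$ for every $x\in D_\kappa$, hence $\max_{x\in D_\kappa}|g_{\mathrm e}(x)-1|\le\max_{x\in D_\kappa}|g(x)-1|$. Now $g_{\mathrm e}$ is an even polynomial of degree at most $2t$ with $g_{\mathrm e}(0)=g(0)=0$; consequently its only possibly nonzero coefficients are those of $x^2,x^4,\dots,x^{2t}$, so it factors as $g_{\mathrm e}(x)=x^2R(x^2)$ for a polynomial $R\in\R[y]$ with $\deg R\le t-1$.

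The second step is the change of variables $y=x^2$, which maps $D_\kappa$ onto $[1/\kappa^2,1]$. It gives
\[
\max_{x\in D_\kappa}|g_{\mathrm e}(x)-1|=\max_{y\in[1/\kappa^2,1]}|yR(y)-1|\ge\min_{\deg R\le t-1}\ \max_{y\in[1/\kappa^2,1]}|yR(y)-1|=\max_{y\in[1/\kappa^2,1]}|y\,q_t^+(y)-1|,
\]
where the last equality is \cref{cor: cheb functions} with $\kappa$ replaced by $\kappa^2$: minimizing $\max|yR(y)-1|$ over degree-$(t-1)$ polynomials is, under the bijection $R\leftrightarrow 1-yR(y)$ onto degree-$t$ polynomials that equal $1$ at $0$, precisely the extremal problem solved by $r_t^+$, hence by $q_t^+$. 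Substituting back $y=x^2$ and using $x\,q_t(x)=x^2q_t^+(x^2)$ from \eqref{eq: def qt}, the right-hand side equals $\max_{x\in D_\kappa}|x\,q_t(x)-1|$. Chaining the three displays yields $\max_{x\in D_\kappa}|xP(x)-1|\ge\max_{x\in D_\kappa}|x\,q_t(x)-1|$ for every admissible $P$; and since $q_t^+$ has degree $t-1$, the polynomial $q_t(x)=xq_t^+(x^2)$ has degree exactly $2t-1$ and is itself a competitor attaining this minimum.

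The step I expect to require the most care is the symmetrization together with the parity bookkeeping: one must verify that $g_{\mathrm e}$ is genuinely even, that $g_{\mathrm e}(0)=0$ forces divisibility by $x^2$, and hence that $R$ lands exactly in the class of degree-$(t-1)$ polynomials for which $q_t^+$ is already known to be optimal, so that no competitors are lost in the reduction. Everything else is the established positive-definite result (\cref{cor: cheb functions}) combined with the triangle inequality and a linear substitution.
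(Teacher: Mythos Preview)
Your proof is correct and follows essentially the same route as the paper: both arguments reduce to the positive-definite case by first symmetrizing (you take the even part of $g=xP$, the paper takes the odd part of $P$, and since $g_{\mathrm e}=xP_{\mathrm{odd}}$ these are identical operations) and then substituting $y=x^2$ to invoke the optimality of $q_t^+$ from \cref{cor: cheb functions}. Your use of the triangle inequality for the symmetrization step is a slightly cleaner variant of the paper's $\max(|a+b|,|a-b|)\ge|a|$ argument, but the underlying logic is the same.
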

\begin{proof}
For a given $t$, we define 
    \[
    \eps^+ := \min_{\substack{P^+ \in \R[y]\\\operatorname{deg} P^+ = t-1}}\max_{y \in [1/\kappa^2, 1]} |y P^+(y) - 1|, \qquad \eps := \min_{\substack{P \in \R[x]\\\operatorname{deg} P = 2t-1}}\max_{x \in D_\kappa} |x P(x) - 1|.
    \]
We first show that $q_t$ certifies that $\eps \leq \eps^+$, and then we show $\eps=\eps^+$. From \cref{sec:Chebyshev iteration}, we know that $\eps^+$ is achieved by the degree-$t-1$ polynomial $q_t^+(x) := \frac{1 - \cheb_t(s(x)) / \cheb_t(s(0))}{x}$, where $s(x) := \frac{1+1/\kappa^2 - 2x}{1-1/\kappa^2}$. Then, for $q_t(x) := \frac{1 - \cheb_t(s(x^2)) / \cheb_t(s(0))}{x}$ we have 
    \[
        \max_{x \in D_\kappa} |x q_t(x) - 1| = \max_{x \in D_{\kappa}} |x^2 q_t^+(x^2) - 1| = \max_{y \in [1/\kappa^2,1]} |y q_t^+(y) - 1| = \eps^+,
    \]
    where in the first equality we use \cref{eq: def qt}. We now show that $\eps = \eps^+$. Let $P(x)$ be a degree-$(2t-1)$ polynomial that satisfies $\max_{x \in D_\kappa} |xP(x)-1| = \eps$. We first show that $P$ is odd. To do this, decompose $P$ as $P(x) = P_{\mathrm{even}}(x) + P_{\mathrm{odd}}(x)$ where $P_{\mathrm{even}}$ is even and $P_{\mathrm{odd}}$ is odd. Then 
    \begin{align*}
        \max_{x \in D_\kappa} |x P(x) - 1| &= \max_{x \in [1/\kappa, 1]} \max\{ |xP(x) - 1|, |-xP(-x) - 1| \} \\
        &= \max_{x \in [1/\kappa, 1]} \max\{ |xP_\mathrm{odd}(x) + xP_\mathrm{even}(x) - 1|, |xP_\mathrm{odd}(x) - xP_\mathrm{even}(x) - 1| \} \\
        &\geq \max_{x \in [1/\kappa, 1]} |xP_\mathrm{odd}(x) - 1| = \max_{x \in D_\kappa} |xP_\mathrm{odd}(x) - 1|.
    \end{align*}
    Hence replacing $P$ by $P_\mathrm{odd}$ decreases $\eps$, so we may assume that $P(x)$ is odd. Then $P(x)/x$ is a degree-$(2t-2)$ even polynomial. Let $P^+(y)$ be the degree-$(t-1)$ polynomial for which $P(x)/x = P^+(x^2)$. Then we have 
    \[
    \max_{y \in [1/\kappa^2,1]} |yP^+(y) - 1| = \max_{x \in [1/\kappa,1]} |x^2 P^+(x^2) - 1| = \max_{x \in D_\kappa} |x P(x) - 1| = \eps
    \]
This shows that $\eps^+ \leq \eps$ which concludes the proof: $q_t$ is the degree-$(2t-1)$ polynomial that minimizes $\max_{x \in D_\kappa} |xP(x)-1|$ over polynomials of degree $2t-1$. 
\end{proof}

\subsection{Relation to the Chebyshev approach of Childs-Kothari-Somma} \label{sec: comparison}
In~\cite{CKS17}, Childs, Kothari, and Somma approached the quantum linear system solver-problem by approximating the function $1/x$ on the domain $D_\kappa$ by (low-degree) polynomials. To start, they approximate $1/x$ by a function that is bounded near the origin: they multiply $1/x$ by a function that is small at the origin and close to 1 on $D_\kappa$. A natural choice for such a function is $1 - (1-x^2)^t$, so the function they end up with turns out to be exactly $p_t(x)$, the polynomial corresponding to $t$ steps of gradient descent applied to the quadratic $\frac12 \vec{x}^\top \mat A^2 \vec{x} - \vec{b}^\top \mat A \vec{x}$! So indeed, this is a good approximation of $1/x$ whenever $t \geq \kappa^2 \log(\kappa^2/\eps)$. 

The polynomial $p_t$ can be written in the Chebyshev basis as follows:
\begin{equation}\label{eq:CKS polynomial expanded}
    p_t(x) = 4 \sum_{j=0}^{t-1} (-1)^j \left(\frac{\sum_{i=j+1}^t \binom{2t}{t+i}}{2^{2t}}\right) \cheb_{2j+1}(x).
\end{equation}
The key insight of~\cite{CKS17} is that this expansion can be truncated at $\tilde O(\kappa)$ terms, since the Chebyshev coefficients decay exponentially. This can be shown by relating the absolute value of the $j$-th coefficient (for $j=0,1,\dots$) to the probability of more than $t+j$ heads appearing in $2t$ tosses of a fair coin. This probability decreases as $e^{-j^2/t}$ which can be seen by applying the Chernoff bound. Thus, starting from $p_t$, an $\eps$-approximation of the inverse, we obtain an $\eps$-approximation of $p_t$ by truncating the summation at $j = \sqrt{t \log(4t / \eps)} = \tilde O(\kappa)$ -- so, for these parameters, the CKS polynomial is a $2\eps$-approximation of the inverse on $D_\kappa$.

Although this truncated polynomial is asymptotically optimal, it is not an optimum of \eqref{eq:sense of optimality}. Hence, the Chebyshev iteration polynomial provides a better approximation for a fixed degree, or conversely requires a lower degree to reach the same error on $D_\kappa$. In \cref{fig:degree comparison table}, we use \cref{cor:degrees of pt and qt} to compute the degree required to achieve error $\eps$ on $D_\kappa$, and observe that the degree of the CKS polynomial is roughly twice the degree of the corresponding Chebyshev iteration polynomial.
\begin{table}[h]
    \centering
    \subfloat[][CKS polynomial]{
        \begin{tabular}{c|c|c|c|c}
         \diagbox{$\kappa$}{$\eps$}	& 0.5 & $10^{-2}$ & $10^{-4}$ & $10^{-6}$ \\\hline
            2 & 15 & 33 & 53 & 71 \\ \hline
            10 & 115 & 203 & 301 & 399 \\ \hline
            100 & 1819 & 2687 & 3669 & 4633 \\ \hline
            1000 & 24913 & 33515 & 43337 & 52989
        \end{tabular}
    }
    \subfloat[][Chebyshev iteration]{
        \begin{tabular}{c|c|c|c|c}
        	\diagbox{$\kappa$}{$\eps$}	& 0.5 & $10^{-2}$ & $10^{-4}$ & $10^{-6}$ \\\hline
        	    2 & 7 & 15 & 25 & 33 \\ \hline
                10 & 61 & 101 & 147 & 193 \\ \hline
                100 & 1061 & 1453 & 1913 & 2373 \\ \hline
                1000 & 15203 & 19115 & 23721 & 28327
        \end{tabular}
    }
    \caption{Degrees of approximation polynomials for a given condition condition number $\kappa$ and error $\eps$, computed according to \cref{cor:degrees of pt and qt}.}
    \label{fig:degree comparison table}
\end{table}
\begin{figure}[h]
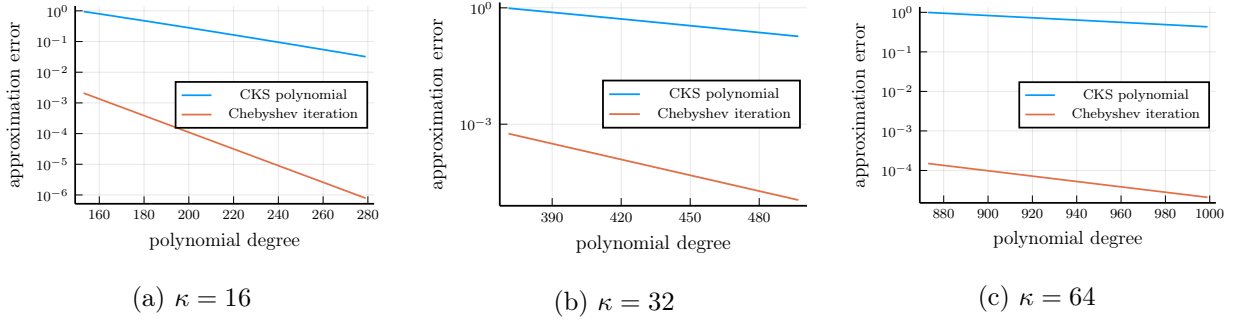

    \centering
    \begingroup
    \pgfplotsset{every axis/.style={scale=0.4}}
	\begin{subfigure}[h]{0.33\textwidth}
		\centering
		\includegraphics[width=\linewidth]{error_kappa=16_degree=var.tikz}
		\caption{$\kappa = 16$}
	\end{subfigure}~
	\begin{subfigure}[h]{0.33\textwidth}
		\centering
		\includegraphics[width=\linewidth]{error_kappa=32_degree=var.tikz}
		\caption{$\kappa = 32$}
	\end{subfigure}~
	\begin{subfigure}[h]{0.33\textwidth}
		\centering
		\includegraphics[width=\linewidth]{error_kappa=64_degree=var.tikz}
		\caption{$\kappa = 64$}
	\end{subfigure}
	\endgroup
    \caption{Error comparison between Chebyshev iteration and the truncated gradient descent polynomial, for a fixed condition number $\kappa$ and varying degrees.}
    \label{fig:error wrt degree}
\end{figure}
\begin{figure}[h]
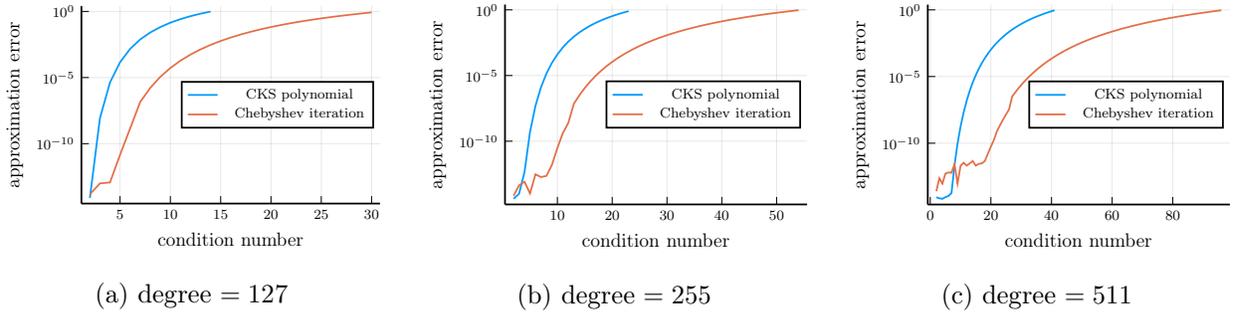

    \centering
    \begingroup
    \pgfplotsset{every axis/.style={scale=0.4}}
	\begin{subfigure}[h]{0.33\textwidth}
		\centering
		\includegraphics[width=\textwidth]{error_kappa=var_degree=127.tikz}
		\caption{degree $=127$}
	\end{subfigure}~
	\begin{subfigure}[h]{0.33\textwidth}
		\centering
		\includegraphics[width=\textwidth]{error_kappa=var_degree=255.tikz}
		\caption{degree $=255$}
	\end{subfigure}~
	\begin{subfigure}[h]{0.33\textwidth}
		\centering
		\includegraphics[width=\textwidth]{error_kappa=var_degree=511.tikz}
		\caption{degree $=511$}
	\end{subfigure}
	\endgroup
    \caption{Error comparison between Chebyshev iteration and the truncated gradient descent polynomial, for a fixed degree and varying condition numbers.}
    \label{fig:error wrt kappa}
\end{figure}
In Figures \ref{fig:error wrt degree} and \ref{fig:error wrt kappa} we compute the actual errors achieved by the two polynomials, for a given degree and condition number. In particular, in \cref{fig:error wrt degree} we see that that for a fixed condition number, the convergence is linear for both polynomials, with a faster rate of convergence in case of Chebyshev iteration (so, for the same degree, the difference in errors is a few orders of magnitude). Further numerical experiments indicate that the ratio of the convergence rates (i.e. the slopes of the lines on \cref{fig:error wrt degree}) is roughly 2, independently of the choice of $\kappa$. Conversely, in \cref{fig:error wrt kappa} we see that with circuits of fixed depth, the error of Chebyshev iteration is an order of magnitude lower, no matter the condition number (in the figure we only consider polynomials that achieve an error $\eps \leq 1$). 

\section{A quantum algorithm}\label{sec:Quantum algorithm}
As mentioned before, our algorithm can be described as applying the polynomial $q_t$ to a 1-block-encoding of the input matrix $\mat A $. This yields an $O(t)$-block-encoding of $q_t(\mat A)$, which can then be applied to the input state $\ket{\vec{b}}$. Formally, we show the following. 
\begin{thm}[Main result]\label{thm:main}
	Let $\mat A $ be a Hermitian matrix with eigenvalues in $D_\kappa$, let $U_{\mat A}$ be a 1-block-encoding of $\mat A $, and let $\eps > 0$. Then, for $t \geq \frac12 \kappa \log(2\kappa^2 / \eps)$, a $2(1+\eps/\kappa^2)t$-block-encoding of $q_t(\mat A)$ can be constructed using $2t-1$ calls to $U_{\mat A}$ and $U_{\mat A}^\dag$.
\end{thm}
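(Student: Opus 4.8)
The plan is to expand $q_t$ in the Chebyshev basis, bound the $1$-norm of the coefficient vector, and hand the renormalized polynomial to the QSVT theorem. Since $q_t$ is odd of degree $2t-1$, write $q_t(x)=\sum_{i=0}^{t-1}c_i\,\cheb_{2i+1}(x)$. As the Chebyshev coefficient $1$-norm dominates the sup-norm on $[-1,1]$, the polynomial $q_t/\norm{\vec c}_1$ is odd of degree $2t-1$ with sup-norm at most $1$, so \cref{thm:QSVT} produces a $\norm{\vec c}_1$-block-encoding of $q_t(\mat A)$ using exactly $2t-1$ applications of $U_{\mat A}$ and $U_{\mat A}^\dag$. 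Recalling that $q_t$ is an $\eps$-approximation of $x^{-1}$ on $D_\kappa$ once $t\ge\tfrac12\kappa\log(2\kappa^2/\eps)$, the whole theorem reduces to the estimate $\norm{\vec c}_1\le 2t$ (which is stronger than the stated bound $2(1+\eps/\kappa^2)t$).

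For the coefficient bound I would relate $\vec c$ to the Chebyshev coefficients of the residual polynomial. Put $R_t(x):=\cheb_t\big(\tfrac{1+1/\kappa^2-2x^2}{1-1/\kappa^2}\big)\big/\cheb_t(\gamma)$ with $\gamma=\tfrac{1+1/\kappa^2}{1-1/\kappa^2}$, so that $x\,q_t(x)=1-R_t(x)$ and $R_t(0)=1$. Writing $R_t(x)=\sum_{j=0}^t\rho_j\,\cheb_{2j}(x)$, the relation $\cheb_{2j}(0)=(-1)^j$ gives $\sum_{j=0}^t(-1)^j\rho_j=1$. Matching coefficients in $x\,q_t(x)=1-R_t(x)$ via $x\,\cheb_{2i+1}(x)=\tfrac12\big(\cheb_{2i}(x)+\cheb_{2i+2}(x)\big)$ yields $\tfrac12c_0=1-\rho_0$ and $\tfrac12(c_m+c_{m-1})=-\rho_m$ for $1\le m\le t-1$; solving this two-term recurrence (using $\sum_j(-1)^j\rho_j=1$) gives the closed form $c_m=2\sum_{j=m+1}^t(-1)^{j-m}\rho_j$. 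Taking absolute values and swapping the order of summation,
\[
\norm{\vec c}_1\;\le\;2\sum_{m=0}^{t-1}\sum_{j=m+1}^t|\rho_j|\;=\;2\sum_{j=1}^t j\,|\rho_j|\;\le\;2t\sum_{j=1}^t|\rho_j|,
\]
so it remains to show $\sum_{j=1}^t|\rho_j|\le 1$.

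The decisive point is that the $\rho_j$ \emph{alternate in sign}; granting this, $\sum_{j=0}^t|\rho_j|=\big|\sum_{j=0}^t(-1)^j\rho_j\big|=|R_t(0)|=1$, hence $\sum_{j\ge1}|\rho_j|=1-|\rho_0|\le 1$ and we are done. To see the alternation, substitute $x=\sin\phi$ and write $\gamma=\cosh 2\beta$. The half-angle identity $\tfrac{\gamma-1}{2}+\tfrac{\gamma+1}{2}\cos\psi=2\cosh^2\beta\,\cos^2(\psi/2)-1=\cheb_2\big(\cosh\beta\cos(\psi/2)\big)$ shows that the argument of $\cheb_t$ inside $R_t(\sin\phi)$ is of this form with $\psi=2\phi$, so by $\cheb_t\circ\cheb_2=\cheb_{2t}$ we get $R_t(\sin\phi)=\cheb_{2t}(\cosh\beta\cos\phi)\big/\cheb_{2t}(\cosh\beta)$. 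Since also $\cheb_{2j}(\sin\phi)=(-1)^j\cos(2j\phi)$, we obtain $R_t(\sin\phi)=\sum_j(-1)^j\rho_j\cos(2j\phi)$; thus the $\rho_j$ alternate in sign exactly when every Fourier--cosine coefficient of $\phi\mapsto\cheb_{2t}(c\cos\phi)$ is non-negative, where $c=\cosh\beta\ge1$.

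This last positivity is the step I expect to be the main obstacle. I would prove it from the generating function $\sum_{t\ge0}\cheb_{2t}(c\cos\phi)\,z^t=\tfrac12+\tfrac{1-z}{2\sqrt{1-2(2c^2-1)z+z^2}}\big(1+2\sum_{k\ge1}\zeta_+^{-k}\cos 2k\phi\big)$, where $\zeta_+$ is the root of modulus $\ge 1$ of $\zeta^2-\tfrac{(1+z)^2-2c^2z}{c^2z}\zeta+1$. For $c\ge1$ one checks that $1-\sqrt{1-2\gamma z+z^2}$ has non-negative power-series coefficients whenever $\gamma\ge1$ (a short induction: squaring forces $\sum_{j+k=m}s_js_k=0$ for $m\ge3$, and $s_1=-\gamma\le0$), which makes both $\tfrac{1-z}{\sqrt{1-2(2c^2-1)z+z^2}}$ (whose $n$-th coefficient equals $P_n(2c^2-1)-P_{n-1}(2c^2-1)\ge0$, $P_n$ the Legendre polynomials, non-decreasing in $n$ on $[1,\infty)$) and $\zeta_+^{-1}$ (which equals $c^2z/(1-\varphi(z))$ for some series $\varphi$ with non-negative coefficients and $\varphi(0)=0$) have non-negative coefficients; extracting the coefficient of $z^t$ and then of $\cos 2k\phi$ then shows every Fourier coefficient is non-negative for $t\ge1$. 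With the positivity established, $\norm{\vec c}_1\le 2t\sum_{j\ge1}|\rho_j|\le 2t\le 2(1+\eps/\kappa^2)t$, and the QSVT circuit supplies the block-encoding with $2t-1$ calls.
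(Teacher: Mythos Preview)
Your overall plan is sound and, as far as I can verify, correct—but it is genuinely different from the paper's argument, and considerably more elaborate.

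Both proofs coincide at the top level: normalize $q_t$ by its sup-norm on $[-1,1]$, invoke \cref{thm:QSVT} to get a $\norm{q_t}_\infty$-block-encoding with $2t-1$ calls, and bound $\norm{q_t}_\infty$ by the Chebyshev coefficient $1$-norm $\norm{\vec c}_1$. The divergence is in how $\norm{\vec c}_1$ is controlled. The paper takes a short, direct route: Cauchy--Schwarz gives $\norm{\vec c}_1\le\sqrt{t}\,\norm{\vec c}_2$, discrete orthogonality of the Chebyshev polynomials at the Chebyshev nodes gives $\norm{\vec c}_2=\norm{\vec q_t}_2$ (the vector of values of $q_t$ at the $2t$ nodes), and then a pointwise bound $|q_t(x)|\le(1+1/\cheb_t(s(0)))/|x|$ combined with the exact identity $\sum_k 1/x_k^2=4t^2$ (obtained from $\cheb_t'(1)/\cheb_t(1)=t^2$) yields $\norm{\vec q_t}_2\le 2(1+1/\cheb_t(s(0)))t$. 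Your route instead expresses $c_m$ exactly via the even Chebyshev coefficients $\rho_j$ of the residual $R_t$, bounds $\norm{\vec c}_1\le 2\sum_j j|\rho_j|\le 2t\sum_{j\ge1}|\rho_j|$, and then shows $\sum_j|\rho_j|=1$ by proving that the $\rho_j$ alternate in sign. The alternation is reduced (via the neat identity $R_t(\sin\phi)=\cheb_{2t}(\cosh\beta\cos\phi)/\cheb_{2t}(\cosh\beta)$) to non-negativity of the Fourier cosine coefficients of $\phi\mapsto\cheb_{2t}(c\cos\phi)$ for $c\ge1$, which you establish through a generating-function factorization and positivity of $P_n(\gamma)-P_{n-1}(\gamma)$ for $\gamma\ge1$ together with non-negativity of the power-series coefficients of $\zeta_+^{-1}$.

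I checked your generating-function identity and it is correct; the claim that $\zeta_+^{-1}=c^2z/(1-\varphi(z))$ with $\varphi$ having non-negative coefficients also holds (one finds $\varphi(z)=\tfrac12\big[(2c^2-3)z-z^2+(1+z)(1-\sqrt{1-2(2c^2-1)z+z^2})\big]$ and verifies the low-order coefficients are $2(c^2-1)$ and $c^4-1$, while the higher ones inherit non-negativity from $1-\sqrt{\cdots}$). The Legendre monotonicity $P_{n+1}(\gamma)\ge P_n(\gamma)$ for $\gamma\ge1$ follows from the three-term recurrence by induction. So your argument closes, and it even gives the slightly sharper constant $\norm{\vec c}_1\le 2t$ valid for \emph{all} $t$, whereas the paper obtains $2(1+1/\cheb_t(s(0)))t$. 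The trade-off is length and technical overhead: the paper's proof is a few lines of linear algebra plus one clean identity, while yours requires the sign-alternation machinery and several auxiliary positivity facts that are only sketched.
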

\begin{proof}
	The algorithm consists of applying QSVT (\cref{thm:QSVT}) to the polynomial $q_t(x)/\norm{q_t}_\infty$. This allows us to construct a $\norm{q_t}_\infty$-block-encoding of $q_t(\mat A)$ with the desired complexity. It remains to upper bound $\norm{q_t}_\infty$ by $2(1+\eps/\kappa^2)t$. Motivated by \cref{eq:coeff norm is an upper bound for p on the interval}, it suffices to upper bound the $1$-norm of the vector $\vec c$ of coefficients of $q_t$ in the Chebyshev basis (again by $2(1+\eps/\kappa^2)t$). In \cref{cor:bound for ct norm} we show that $\|\vec c_t\|_1 \leq 2(1+\eps/\kappa^2)t$.
\end{proof}
The block-encoding of $q_t(\mat A)$ can now be used as a black-box replacement for the block-encoding of the corresponding CKS polynomial evaluated at $\mat A $. For example, using variable-time amplitude amplification, an $\softO(\kappa)$-query (to $U_{\mat A}$) complexity QLS algorithm can be derived. We refer the reader to \cite{CKS17,Gilyen2019SVT,martyn2021grand} for an overview of these techniques.

As an alternative approach, one could use the fact that $\|\vec c_t\|_1$ is bounded in order to evaluate $q_t$ via LCU (\cref{thm: LCU + Cheb}). At the cost of using $O(\log t)$ additional qubits, an LCU-based approach would yield a more ``natural'' quantum algorithm, that does away with the classical angle computation preprocessing step required by QSVT -- computing these angles efficiently in a numerically stable way is the subject of ongoing research \cite{chao2020finding, dong2021efficient, Haah2019product}. Moreover, in \cref{app:Good example functions for LCU} we consider some other commonly-used functions, and bound their coefficient norms using similar techniques. For these functions, the coefficient norm is only a logarithmic factor away from the maximum absolute value on the interval $[-1, 1]$, meaning that they can be approximately evaluated with LCU in addition to QSVT, with slightly deeper circuits (multiplicative logarithmic overhead) and slightly more qubits (additive logarithmic overhead).

\subsection{Bounding the Chebyshev coefficients}\label{sec:Bounding the coeffs}
As discussed above, in order to apply (a normalized version of) $q_t$ to a block-encoding of a Hermitian matrix with eigenvalues in $D_\kappa$, we need a bound on the sup-norm of $q_t$ on the interval~$[-1,1]$. In order to derive such a bound, we express $q_t$ in the basis of Chebyshev polynomials. Each of the Chebyshev polynomials has sup-norm equal to $1$ and therefore a bound on the $1$-norm of the coefficient vector provides a bound on the sup-norm of $q_t$. Recall that since $q_t$ is an odd polynomial, its expansion in the Chebyshev basis only involves the odd-degree Chebyshev polynomials. That is, we can write 
\begin{equation} \label{eq:def ct}
q_t(x) = \sum_{i=0}^{t-1} c_{t,i} \cheb_{2i+1}(x)
\end{equation}
for some vector $\vec c_t = (c_{t,i})_{i \in \{0,\ldots,t-1\}}$ of coefficients. One can give an analytic expression for $c_{t,i}$ using the fact that the Chebyshev polynomials are orthogonal with respect to the \emph{Chebyshev measure}.
Here we take a different approach and use the following discrete orthogonality relations. Fix a degree $m \in \N$ and let $\{x_1,\ldots,x_{m}\}$ be the roots of $\cheb_m(x)$. The $x_k$'s are called the \emph{Chebyshev nodes} and they admit an analytic formula:
\begin{equation}
    x_k = \cos\left(\frac{(k-\tfrac12)\pi}{m}\right) \qquad \text{for } k=1,\ldots,m
\end{equation}
The discrete orthogonality relation that we will use is the following. For $0 \leq i,j < m$, we have
\begin{align}\label{eq:chebyshev discrete orthogonality}
    \sum_{k=1}^m \cheb_i(x_k) \cheb_j(x_k) = \begin{cases}
        m & \text{ if } i=j=0, \\
        \frac{m}{2} & \text{ if } i=j<m, \\
        0 & \text{ if } i \neq j.
    \end{cases}
\end{align}
Since $q_t$ is a polynomial of degree $2t-1$, we will use the discrete orthogonality conditions corresponding to $m = 2t$ to recover the coefficient of $\cheb_{2i+1}$ in $q_t$. We have 
\begin{equation}
c_{t,i} = \frac{1}{t} \sum_{k=1}^{2t} q_t(x_k) \cheb_{2i+1}(x_k)
\end{equation}
for all $i \in \{0,1,\ldots,t-1\}$. We can equivalently write this in matrix form, $\vec{c}_t = \frac1t \chebmat_t \vec{q}_t$, where
\[
    \chebmat_t = \begin{bmatrix}
        \cheb_1(x_1) & \cheb_1(x_2) & \dots & \cheb_1(x_{2t}) \\
        \cheb_3(x_1) & \cheb_3(x_2) & \dots & \cheb_3(x_{2t}) \\
        \vdots & \vdots & \ddots & \vdots \\
        \cheb_{2t-1}(x_1) & \cheb_{2t-1}(x_2) & \dots & \cheb_{2t-1}(x_{2t})
    \end{bmatrix} \text{ and } \vec{q}_t = \begin{bmatrix}
        q_t(x_1) \\
        q_t(x_2) \\
        \vdots \\
        q_t(x_{2t})
    \end{bmatrix}.
\]
Our goal is to show that $\|\vec c_t\|_1 \leq C\cdot t$ for a small constant $C$. To do so, we first use the Cauchy-Schwarz inequality to obtain 
\begin{equation} \label{eq:cheb ineq 1-norm}
\|\vec c_t\|_1 \leq \sqrt{t} \|\vec c_t\|_2 = \frac{1}{\sqrt{t}} \|\vec \chebmat_t \vec q_t\|_2 =  \frac{\|\chebmat_t\|}{\sqrt{t}} \|\vec q_t\|_2  = \|\vec q_t\|_2
\end{equation}
where the second to last equality follows from the discrete orthogonality relations \cref{eq:chebyshev discrete orthogonality}: we see that $\chebmat_t \chebmat_t^* = t I_t$ and therefore $\|\chebmat_t\| = \sqrt{t}$. We are thus left to bound $\|\vec q_t\|_2$. 
\begin{lem}
    We have $\|\vec{q}_t\|_2 \leq 2(1+\frac{1}{\cheb_t(s(0))})  t$ for all $t \in \N$. In particular, for $t \geq \frac12 \kappa \log(2\kappa^2 / \eps)$ we have $\|\vec q_t\|_2 \leq 2(1+\eps/\kappa^2)t$. 
\end{lem}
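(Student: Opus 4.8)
The plan is to estimate $\|\vec q_t\|_2^2 = \sum_{k=1}^{2t} q_t(x_k)^2$ by a term-by-term comparison of $q_t$ at the Chebyshev nodes $x_k=\cos\frac{(2k-1)\pi}{4t}$ with the function $1/x$. Write $q_t(x) = (1 - r_t(x))/x$ with $r_t(x) = \cheb_t(s(x^2))/\cheb_t(s(0))$ and $s(u) = \frac{1+1/\kappa^2-2u}{1-1/\kappa^2}$ the affine map from the proof of \cref{cor:chebyshev iteration is optimal for general matrices}, and set $\delta := 1/\cheb_t(s(0))$. The whole argument rests on two facts: a closed form for $\sum_{k=1}^{2t} x_k^{-2}$, and a uniform bound $|1-r_t(x_k)|\le 1+\delta$.

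For the first fact I would show $\sum_{k=1}^{2t} x_k^{-2}=4t^2$ by comparing two descriptions of $\cheb_{2t}$ near $0$. Pairing its roots as $\pm y_1,\dots,\pm y_t$ gives $\cheb_{2t}(x)=2^{2t-1}\prod_{j=1}^t(x^2-y_j^2)$, while $\cheb_{2t}(x)=\cos(2t\arccos x)=(-1)^t\bigl(1-2t^2x^2+O(x^4)\bigr)$. Equating constant terms yields $2^{2t-1}\prod_j y_j^2=1$, and then equating $x^2$-coefficients yields $\sum_j y_j^{-2}=2t^2$, so $\sum_{k=1}^{2t}x_k^{-2}=2\sum_j y_j^{-2}=4t^2$ (equivalently, one can use $\sum_k x_k^{-2}=(\cheb_{2t}'(0)/\cheb_{2t}(0))^2-\cheb_{2t}''(0)/\cheb_{2t}(0)$).

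For the second fact, recall $\cheb_t$ is increasing on $[1,\infty)$ with $\cheb_t(1)=1$, and $s$ is decreasing with $s(0)=\frac{\kappa^2+1}{\kappa^2-1}>1$, $s(1/\kappa^2)=1$, $s(1)=-1$; in particular $\cheb_t(s(0))\ge 1$, so $\delta\in(0,1]$. If $|x_k|\ge 1/\kappa$ then $x_k^2\in[1/\kappa^2,1]$, hence $s(x_k^2)\in[-1,1]$, hence $|\cheb_t(s(x_k^2))|\le 1$, hence $|r_t(x_k)|\le\delta$ and $|1-r_t(x_k)|\le 1+\delta$. If $|x_k|<1/\kappa$ then $x_k^2\in(0,1/\kappa^2)$, so $s(x_k^2)\in(1,s(0)]$, and monotonicity of $\cheb_t$ on $[1,\infty)$ gives $1<\cheb_t(s(x_k^2))\le\cheb_t(s(0))$, so $r_t(x_k)\in(\delta,1]$ and $1-r_t(x_k)\in[0,1-\delta)\subseteq[0,1+\delta)$. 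Either way $|1-r_t(x_k)|\le 1+\delta$.

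Combining, $\|\vec q_t\|_2^2=\sum_{k=1}^{2t}\frac{(1-r_t(x_k))^2}{x_k^2}\le(1+\delta)^2\sum_{k=1}^{2t}\frac1{x_k^2}=4(1+\delta)^2t^2$, i.e. $\|\vec q_t\|_2\le 2\bigl(1+\tfrac1{\cheb_t(s(0))}\bigr)t$. For the ``in particular'' part, applying the estimate from the proof of \cref{thm: cheb approx pos} with $\kappa^2$ in place of $\kappa$ shows that for $t\ge\frac12\kappa\log(2\kappa^2/\eps)$ we have $\cheb_t(s(0))=\cheb_t\bigl(1+\tfrac{2}{\kappa^2-1}\bigr)\ge\kappa^2/\eps$ (the bound persists for larger $t$ since $\cheb_t(y)$ increases with $t$ for fixed $y>1$), hence $1/\cheb_t(s(0))\le\eps/\kappa^2$. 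The one real subtlety is that one must bound $q_t(x_k)^2$ term by term rather than splitting $q_t=\frac1x-\frac{r_t}{x}$ and using the triangle inequality: the latter fails on the small nodes $|x_k|<1/\kappa$, which carry most of the mass $4t^2$ and on which $r_t(x_k)$ is not small — but these are exactly the nodes on which $1-r_t(x_k)\in[0,1)$, so the crude bound $|1-r_t(x_k)|\le 1$ already suffices there. Beyond that, establishing $\sum_k x_k^{-2}=4t^2$ is the one piece that needs a little care.
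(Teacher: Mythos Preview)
Your proof is correct and follows essentially the same strategy as the paper: bound $|1-r_t(x_k)|\le 1+\delta$ pointwise via the same two-case analysis, then evaluate $\sum_{k=1}^{2t}x_k^{-2}=4t^2$. The only difference is in how that sum is computed: you extract it from the Taylor expansion of $\cheb_{2t}$ at $0$, whereas the paper reduces via the half-angle formula to $\sum_{k=1}^t\frac{1}{1-\cos((2k-1)\pi/(2t))}$ and then invokes the logarithmic-derivative identity $\sum_k\frac{1}{x-r_k}=P'(x)/P(x)$ with $P=\cheb_t$ at $x=1$ to obtain $\cheb_t'(1)/\cheb_t(1)=t^2$.
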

\begin{proof}
We start by bounding $\abs{q_t(x)}$ on $[-1, 1]$, and we recall that
\[
    q_t(x) = \frac{1-\cheb_t(s(x))/\cheb_t(s(0))}{x}, \quad\text{where}\quad s(x) = \frac{1+1/\kappa^2 - 2x^2}{1-1/\kappa^2}.
\]
On one hand, when $x \in D_\kappa$ we have $s(x) \in [-1,1]$ and thus $\abs{1-\cheb_t(s(x))/\cheb_t(s(0))} \leq 1 + 1/\cheb_t(s(0))$. On the other hand, when $\abs{x} \leq 1/\kappa$ we have $1 \leq s(x) \leq s(0) = \frac{1+1/\kappa^2}{1-1/\kappa^2}$.
Since $\cheb_t(x)$ is increasing for $x \geq 1$, it follows that $0 \leq 1-\cheb_t(s(x))/\cheb_t(s(0)) \leq 1$ for all $\abs{x} \leq 1/\kappa$. Together this shows that 
\[
\abs{q_t(x)} = \abs{\frac{1-\cheb_t(s(x))/\cheb_t(s(0))}{x}} \leq \frac{1 + 1/\cheb_t(s(0))}{\abs{x}} \qquad \text{for all } x \in [-1,1] \setminus \{0\}.
\]
We now bound the norm of $\vec q_t$. We have
    \begin{align*}
        \norm{\vec{q}_t}^2 &= \sum_{k=1}^{2t} q_t(x_k)^2 \leq \left(1 + \frac{1}{\cheb_t(s(0))}\right)^2 \sum_{k=1}^{2t} \frac{1}{x_k^2} = \left(1 + \frac{1}{\cheb_t(s(0))}\right)^2 \sum_{k=1}^{2t} \frac{1}{\cos^2 \left( \frac{2k-1}{4t}\pi \right)},
    \end{align*}
    where we substituted the exact expression for the Chebyshev nodes $x_k = \cos \left( \frac{2k-1}{4t}\pi \right)$. Moreover, we have
    \[
        \cos^2\left( \frac{2(2t-k+1)-1}{4t}\pi \right) = \cos^2 \left( \frac{2k-1}{4t}\pi \right) = \frac{1-\cos\left(\frac{2k-1}{2t}\pi\right)}{2} \quad\text{for all}\quad 1 \leq k \leq t,
    \]
    where the first equality comes from $x_{2t - k + 1} = -x_k$. Therefore, we have
    \[
        \norm{\vec{q}_t}^2 \leq 4 \left(1 + \frac{1}{\cheb_t(s(0))}\right)^2 \sum_{k=1}^t \frac{1}{1-\cos(\frac{2k-1}{2t}\pi)}.
    \]
    We note that the roots of $\cheb_t(x)$ are exactly $\cos(\frac{2k-1}{2t}\pi)$. For any polynomial $P(x)=C\prod_{k=1}^t (x-r_k)$ the following identity holds for all $x$ for which $P(x) \neq 0$:
    \[
        \sum_{k=1}^t \frac{1}{x-r_k} = \frac{P'(x)}{P(x)}.
    \]
    Applying the above to $P(x)=\cheb_t(x)$ and $x=1$ (which is not a root of $\cheb_t$), we get
    \[
        \sum_{k=1}^t \frac{1}{1-\cos(\frac{2k-1}{2t}\pi)} = \frac{\cheb_t'(1)}{\cheb_t(1)} = \frac{t \cdot \mathcal  U_{t-1}(1)}{1} = t^2.
    \]
    This concludes the main part of the proof: we have shown that $\|\vec q_t\| \leq 2 (1+\frac{1}{\cheb_t(s(0))}) t$. 
    
    Finally, for $t \geq \frac12 \kappa \log(2\kappa^2 / \eps)$, we bound $1/\cheb_t(s(0))$ as in the proof of \cref{thm: cheb approx pos}. Namely, using the same inequalities, we have
    \[
    	\cheb_t(s(0)) \geq \frac12 \left( \frac{\kappa + 1}{\kappa - 1} \right)^t \geq \frac12 \left( 1 + \frac{2}{\kappa - 1} \right)^{\frac12 \kappa \log(2\kappa^2 / \eps)} \geq \frac{\kappa^2}{\eps}.  \qedhere
    \]
\end{proof}
Combining this lemma with \cref{eq:cheb ineq 1-norm}, we derive the same bound for $\norm{\vec{c}_t}_1$:
\begin{cor}\label{cor:bound for ct norm}
	For all $t \in \N$,  $\norm{\vec{c}_t}_1 \leq 2(1+\frac{1}{\cheb_t(s(0))}) t$. In particular, for $t \geq \frac12 \kappa \log(2\kappa^2 / \eps)$ we have $\|\vec c_t\|_1 \leq 2(1+\eps/\kappa^2)t$. 
\end{cor}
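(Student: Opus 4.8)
The plan is to deduce this corollary directly from the preceding lemma together with the norm-conversion inequality \eqref{eq:cheb ineq 1-norm} that was established just above it. Recall that \eqref{eq:cheb ineq 1-norm} chains the Cauchy--Schwarz bound $\|\vec c_t\|_1 \le \sqrt{t}\,\|\vec c_t\|_2$ with the matrix identity $\vec c_t = \tfrac1t \chebmat_t \vec q_t$ and the fact that $\chebmat_t \chebmat_t^* = t I_t$ (a direct consequence of the discrete orthogonality relations \eqref{eq:chebyshev discrete orthogonality}), so that $\|\chebmat_t\| = \sqrt t$ and altogether $\|\vec c_t\|_1 \le \|\vec q_t\|_2$. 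Hence the first claim $\|\vec c_t\|_1 \le 2\bigl(1 + 1/\cheb_t(s(0))\bigr) t$ is immediate once we plug in the bound $\|\vec q_t\|_2 \le 2\bigl(1 + 1/\cheb_t(s(0))\bigr) t$ from the lemma.

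For the ``in particular'' part, I would reuse the estimate on $\cheb_t(s(0))$ already worked out in the proof of the lemma (which itself follows the argument of \cref{thm: cheb approx pos}): using the explicit expression \eqref{eq:explicit expression for Tn for big x} for $\cheb_t$ outside $[-1,1]$ together with the elementary inequality $(1+x/n)^{n+x/2} \ge e^x$, one obtains $\cheb_t(s(0)) \ge \tfrac12\bigl(\tfrac{\kappa+1}{\kappa-1}\bigr)^t \ge \kappa^2/\eps$ as soon as $t \ge \tfrac12 \kappa \log(2\kappa^2/\eps)$. Substituting $1/\cheb_t(s(0)) \le \eps/\kappa^2$ into the first claim then gives $\|\vec c_t\|_1 \le 2(1 + \eps/\kappa^2) t$, as desired.

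Since the substantive computation---bounding $\|\vec q_t\|_2^2$ by reducing it to $\sum_{k=1}^t \tfrac{1}{1-\cos(\tfrac{2k-1}{2t}\pi)} = \cheb_t'(1)/\cheb_t(1) = t^2$ via the logarithmic-derivative identity $\sum_k \tfrac{1}{x-r_k} = P'(x)/P(x)$---is already carried out in the lemma, I do not expect any genuine obstacle here. The only point requiring care is that the Cauchy--Schwarz step in \eqref{eq:cheb ineq 1-norm} is applied to $\vec c_t$, a vector of length $t$ (its entries are the coefficients of $\cheb_1,\cheb_3,\dots,\cheb_{2t-1}$), so the dimensional factor is $\sqrt t$ rather than $\sqrt{2t}$; this is precisely what keeps the final bound linear in $t$ with the clean constant $2(1+\eps/\kappa^2)$, matching the subnormalization claimed in \cref{thm:main}.
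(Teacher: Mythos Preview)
Your proposal is correct and matches the paper's approach exactly: the paper simply states that the corollary follows by combining the preceding lemma's bound $\|\vec q_t\|_2 \le 2\bigl(1+1/\cheb_t(s(0))\bigr)t$ with \eqref{eq:cheb ineq 1-norm}, and the ``in particular'' part is already contained in that lemma. Your additional remark about the Cauchy--Schwarz factor being $\sqrt t$ rather than $\sqrt{2t}$ is a helpful sanity check but not needed beyond what the paper already recorded in \eqref{eq:cheb ineq 1-norm}.
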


\subsection{Efficiently computing the coefficients}
In the case of evaluating $q_t$ via LCU, one question of practical relevance is how to compute the coefficients $\vec{c}_t$. Naively using the recurrence \eqref{eq:qt recurrence} to compute $\vec{c}_t$ gives rise to an algorithm with $O(t^2)$ arithmetic operations with real numbers. Alternatively, one can use FFT-based Chebyshev interpolation algorithms that can compute $\vec{c}_t$ with $O(t \log t)$ operations given the vector $\vec{q}_t$ of the values of $q_t(x)$ at the order-$t$ Chebyshev nodes~\cite{Gentleman1972DCT}. Thus, in order to get an $O(t \log t)$-operation algorithm for computing $\vec{c}_t$, it suffices to show that $q_t(x)$ can be evaluated at a single Chebyshev node $x_k$ with $O(\log t)$-operations. Given the form of $q_t$, this means that we need to compute $\cheb_t(s(x_k))$ with $O(\log t)$ operations. One way to do this is via the degree-halving identities
\[
    \cheb_{2t}(x) = 2\cheb_t(x)^2 - 1 \quad\text{and}\quad \cheb_{2t+1}(x) = 2\cheb_{t+1}(x)\cheb_t(x) - x.
\]

\subsection{A more natural quantum algorithm?}
Given the reduction of the general linear system problem to the PD case (\cref{cor:reducing general linear systems to psd}), one might be tempted to mirror this reduction when designing a quantum algorithm, with the goal of achieving $\tilde O(\sqrt{\kappa})$ complexity for solving PD systems. The input of such an algorithm would be a (block-encoding of a) Hermitian matrix $\mat A $ with eigenvalues in $[1/\kappa, 1]$, and the output would be a block-encoding of $q_t^+(\mat A)$. To evaluate this polynomial using QSVT, we first need to normalize it by dividing it by $\max_{x\in [-1, 1]} |q_t^+(x)|$. It turns out that this maximum grows exponentially with $t$: one can lower bound it by $|q_t^+(-1)|$ and we have
\begin{align*}
	|q_t^+(-1)| &\geq \frac{\cheb_t\left(\frac{1+1/\kappa+2}{1-1/\kappa}\right)}{\cheb_t\left(\frac{1+1/\kappa}{1-1/\kappa}\right)} - 1 \geq \frac{\cheb_t( 3+4/(\kappa - 1) )}{\cheb_t(1 + 2/(\kappa - 1))} \\
	&\geq \frac{\cheb_t(3)}{\cheb_t(2)} \geq \frac12 \left(\frac{3 + 2\sqrt{2}}{2+\sqrt{3}}\right)^t \geq \frac12 \left(\frac32\right)^t.
\end{align*}
Therefore, amplifying the output of QSVT would take exponential time. In the case of LCU, the coefficient 1-norm is lower bounded by $|q_t^+(-1)|$ (by \cref{eq:coeff norm is an upper bound for p on the interval}), so the output of a LCU-based algorithm would also need to be amplified exponentially. Alternative approaches of multiplying $q_t^+(x)$ by a rectangle function that is close to $1$ on $[1/\kappa, 1]$ and close to $0$ elsewhere are similarly fruitless as the degree of the resulting approximation polynomial would become linear in $\kappa$. It should be noted, however, that these issues can be avoided if we assume that the mapping $x \mapsto \frac{1+1/\kappa - 2x}{1-1/\kappa}$ has already been performed ``ahead of time'': in \cite{orsucci2021solving}, Orsucci and Dunjko have shown that PD matrices can indeed be inverted in $\softO(\sqrt{\kappa})$, provided that a block-encoding of $I - \alpha \mat A$ is given as input (for suitable $\alpha$).

Another natural alternative approach would be to quantize a method such as momentum gradient descent, which also converges in $\softO(\sqrt{\kappa})$ for PD matrices \cite{Polyak1987Optimization}. One way to achieve this would be using the approach of Kerenidis and Prakash \cite{KP20Gradient}, who quantized the basic gradient descent algorithm by implementing the recurrence $\vec{r}_{t+1} = (I - \eta \mat A)\vec{r}_t$ satisfied by the differences $\vec{r}_t := \vec{x}_t - \vec{x}_{t-1}$ of successive iterates. Applying this idea to momentum gradient descent, one gets a recurrence involving two successive differences:
\[
    \begin{bmatrix}
            \vec{r}_{t+1} \\
            \vec{r}_t
    \end{bmatrix} = \underbrace{\begin{bmatrix}
            (1 + \beta) I - \eta \mat A & -\beta I \\
            I & 0
    \end{bmatrix}}_{\mat M} \begin{bmatrix}
            \vec{r}_t \\
            \vec{r}_{t-1}
    \end{bmatrix}, 
\]
for suitable choices of $\eta$ and $\beta$. For example, following \cite[Chapter 3]{Polyak1987Optimization}, one can set $\eta = 4/(1+\sqrt{1/\kappa})^2$ and $\beta = \left( 1-2/(1+\sqrt{\kappa}) \right)^2$. 
Implementing a similar approach as in \cite{KP20Gradient} would require the construction of $O(1)$-block-encodings of powers of $\mat M$. In particular, this would require $\mat M$ to have a small norm. Unfortunately, for large enough $\kappa \geq 9$ and the above choice of $\eta,\beta$, one has $\|\mat M\| \geq \sqrt{2}$ which means that a block-encoding of $\mat M^t$ needs to have sub-normalization at least $2^{t/2}$. 

\section{Query lower bounds}\label{sec:Query lower bound}
So far, we have been considering algorithms (i.e. upper bounds) for the QLS problem. The complexity of the best algorithm for the QLS problem
depends linearly on $\kappa$ (we ignore the polylogarithmic factors in this section), so a natural question is whether this dependence is optimal. In \cite{HHL09} it has been shown that this is indeed the case: in the sparse access input model (the setting in which such lower bounds are usually proven), the complexity of QLS for general systems is $\Omega(\min(\kappa,n))$. Recently, it has been shown \cite{orsucci2021solving} that the same $\Omega(\min(\kappa,n))$ lower bound even holds for the restriction of QLS to PD matrices -- this is surprising since in the classical setting a $\sqrt{\kappa}$-separation exists between the general and the PD case. We note that both of these lower bounds apply when the output of the QLS solver is the quantum state $\ket{\mat A^{-1} \vec{b}}$. As a consequence, one can show that computing a classical description of $\mat A ^{-1}\vec{b}$ is just as hard.

Both of the above results apply to the small-$\kappa$ regime. In particular, they leave open the possibility of a $o(n^\omega)$-time quantum algorithm for solving linear systems (with classical output). The existence of such an algorithm would speed up many classical optimization algorithms (e.g., interior point methods) in a black-box way. 
In \cite{Dorn2009QueryComplexity} it was shown that one cannot obtain a large quantum speedup when the output is required to be classical: $\Omega(n^2)$ quantum queries to the entries of $\mat A $ are needed to obtain a classical description of a single coordinate of $\mat A ^{-1} e_n$, where $e_n$ is the $n$-th standard basis vector in $\R^n$. The statement is robust in the following sense: after normalizing $\mat A ^{-1}e_n$, it suffices to obtain a $\delta$-additive approximation of the first coordinate for some $\delta = O(1/n^2)$. We present a simplified proof of this result of \cite{Dorn2009QueryComplexity} at the end of this section.
Note that this high precision prevents one from lifting the bound to the quantum-output setting: to obtain a $\delta$-additive approximation of a single coordinate of $\ket{\mat A^{-1}b}$ one can use roughly $1/\delta$ rounds of amplitude estimation on a QLS-solver~$\mathcal A$. With $\delta=O(1/n^2)$ this only implies that $n^2 \cdot \text{cost}(\mathcal A) = \Omega(n^2)$. A second type of quantum lower bound is described in \cite{Gilyen2019SVT}: roughly speaking, if a (smooth) function $f:I \to [-1,1]$ has a derivative whose absolute value is $d$, then $\Omega(d)$ uses of a $1$-block-encoding $U_{\mat A}$ of $\mat A$ are needed to create a block-encoding of $f(\mat A)$. Here $I$ is a subset of $[-1,1]$ that contains the eigenvalues of the Hermitian matrix $\mat A$. Applied to $f(x) = 1/(\kappa x)$, this shows that indeed $\Omega(\kappa)$ applications of $U_{\mat A}$ are needed to create a block-encoding of $\mat A^{-1}$. As mentioned before, a block-encoding of $\mat A^{-1}$ can be combined with a state preparation oracle for $\vec b$ to solve the QLS problem. Such a strategy however naturally incurs a $\kappa$-dependence in the runtime, and it remains an interesting open question whether one could solve the QLS problem (with quantum output!) without such a dependence in $\kappa$ and in time $o(n^\omega)$. 

\subsection{Lower bound for matrix inversion with classical output} \label{app: Lower bound}
We present a simplified proof of a matrix-inversion lower bound result of~\cite{Dorn2009QueryComplexity}. It is based on the quantum query complexity of the majority function $\mathrm{MAJ}_n:\{0,1\}^n \to \{0,1\}$ which is takes value $1$ on input $\vec x$ if and only if $\sum_{i \in [n]} x_i > n/2$. It is well known that the quantum query complexity of $\mathrm{MAJ}_n$ is $\Theta(n)$~\cite{Beals01PolynomialMethod}. 
\begin{lem}\label{lem:Matrix power lower bound}
    Let $\mat X \in \{0, 1\}^{n\times n}$. Then, the matrix $\mat A  \in \{0, 1\}^{(2n+2) \times (2n+2)}$ defined as
    \[
        \mat A = \left\lbrack\begin{array}{@{}c|c@{}}
  \begin{matrix}
 0  & 1_n^* \\
 1_n & 0 
  \end{matrix}
  &   \begin{matrix}
 0  & 0 \\
 \mat X & 0 
  \end{matrix} \\ \hline 
    \begin{matrix}
 0  & \mat X^* \\
 0 & 0 
  \end{matrix} &
    \begin{matrix}
 0  & 1_n \\
 1_n^* & 0 
  \end{matrix}
\end{array}\right\rbrack
    \]
    satisfies $(\mat A^3)_{1, 2n+2} = \sum_{i=1}^n \sum_{j=1}^n X_{i,j}$.
\end{lem}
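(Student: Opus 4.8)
The plan is to exploit the layered block structure of $\mat A$ and reduce $(\mat A^3)_{1,2n+2}$ to the scalar $1_n^*\mat X\,1_n$, which is exactly $\sum_{i,j} X_{i,j}$.

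First I would fix notation for the index blocks. Split $\{1,\dots,2n+2\}$ into $I_0=\{1\}$, $I_1=\{2,\dots,n+1\}$, $I_2=\{n+2,\dots,2n+1\}$, $I_3=\{2n+2\}$, of sizes $1,n,n,1$, and read off the nonzero blocks from the display in the statement: the only nonzero blocks of $\mat A$ are $\mat A_{I_0,I_1}=1_n^*$, $\mat A_{I_1,I_2}=\mat X$, $\mat A_{I_2,I_3}=1_n$, together with their transposes $\mat A_{I_1,I_0}=1_n$, $\mat A_{I_2,I_1}=\mat X^*$, $\mat A_{I_3,I_2}=1_n^*$. Equivalently, $\mat A$ is the (weighted) adjacency matrix of a graph with a source $1$, a sink $2n+2$, and two middle layers $a_1,\dots,a_n$ and $b_1,\dots,b_n$, whose only edges are $1\!-\!a_i$ of weight $1$, $a_i\!-\!b_j$ of weight $X_{ij}$, and $b_j\!-\!(2n+2)$ of weight $1$. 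This translation is pure bookkeeping: one just matches the $2\times2$-of-$2\times2$ display against this description, taking care of the inner partitions (the top-left outer block is partitioned $1+n$, so $1_n^*$ sits in row $I_0$, columns $I_1$, etc.).

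Then $(\mat A^3)_{1,2n+2}=\sum_{p,q} A_{1,p}A_{p,q}A_{q,2n+2}$ is the weighted count of length-$3$ walks from $1$ to $2n+2$, and the layered structure kills almost every term: $A_{1,p}\neq 0$ forces $p=a_i$; from $a_i$ one can only return to the source (but then $A_{1,2n+2}=0$, contributing nothing) or step to some $b_j$ with weight $X_{ij}$; and from $b_j$ the only nonzero step to the sink is $A_{b_j,2n+2}=1$. Hence the surviving walks are exactly $1\to a_i\to b_j\to 2n+2$ of weight $1\cdot X_{ij}\cdot 1$, and summing over $i,j$ gives $\sum_{i=1}^n\sum_{j=1}^n X_{ij}$. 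The same thing drops out of a two-line block computation: in the first block-row of $\mat A^2$ only $(\mat A^2)_{I_0,I_2}=1_n^*\mat X$ matters, and then $(\mat A^3)_{I_0,I_3}=(\mat A^2)_{I_0,I_2}\,\mat A_{I_2,I_3}=1_n^*\mat X\,1_n=\sum_{i,j}X_{ij}$. There is no real obstacle here; the only thing to get right is the index accounting that makes the ``no length-$3$ walk can backtrack and still reach the sink'' observation transparent.
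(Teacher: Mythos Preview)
Your proof is correct and follows essentially the same graph-theoretic idea as the paper: interpret $\mat A$ as the adjacency matrix of a layered graph and identify $(\mat A^3)_{1,2n+2}$ with the number of length-$3$ walks from vertex $1$ to vertex $2n+2$, which reduces to $\sum_{i,j}X_{ij}$. Your treatment is in fact slightly more careful than the paper's (you explicitly dispose of the backtracking walks $1\to a_i\to 1$, and you add the clean block computation $(\mat A^3)_{I_0,I_3}=1_n^*\mat X\,1_n$), but the approach is the same.
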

\begin{proof}
    $\mat A $ is the adjacency matrix of an undirected graph that can be described as follows. We start with a bipartite graph between two sets of $n$ vertices whose edge set is described by $X$, then we add two vertices labeled $1$ and $2n+2$ that we connect respectively to the first set of vertices and the second set of vertices. 
    The entry $(1, 2n+2)$ of $\mat A ^3$ counts the number of paths of length $3$ from $1$ to $2n+2$ in this graph. This equals the number of edges between the sets $\{2, \dots, n+1\}$ and $\{n+2, \dots, 2n+1\}$, that is, $(\mat A^3)_{1,2n+2} = \sum_{i=1}^n \sum_{j=1}^n X_{i,j}$.
\end{proof}
\begin{cor}
Let $\mat A  \in \{0,1\}^{n \times n}$. Determining a single off-diagonal entry of $\mat A ^{3}$, with success probability $\geq 2/3$, takes $\Theta(n^2)$ quantum queries to $\mat A $.
\end{cor}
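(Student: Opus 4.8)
The plan is to derive the corollary from Lemma~\ref{lem:Matrix power lower bound} together with the known $\Theta(m)$ quantum query complexity of the majority function $\mathrm{MAJ}_m$. The upper bound is immediate and deserves only a sentence: querying all entries of an $N$-by-$N$ input matrix costs $N^2$ queries (each returning one bit), after which $\mat A^3$, and in particular any off-diagonal entry, is determined by classical post-processing. So the content is the matching lower bound.

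For the lower bound, let $\mathcal B$ be any quantum algorithm that, given oracle access to an $N$-by-$N$ $\{0,1\}$-matrix $\mat A$, outputs the value of the fixed off-diagonal entry $(\mat A^3)_{1,N}$ with success probability $\geq 2/3$, using $q$ queries. Set $N = 2n+2$. Given an arbitrary $\mat X \in \{0,1\}^{n\times n}$, build the matrix $\mat A$ of Lemma~\ref{lem:Matrix power lower bound}; by that lemma $(\mat A^3)_{1,2n+2} = \sum_{i=1}^n\sum_{j=1}^n X_{i,j}$, which is exactly the Hamming weight of $\mat X$ viewed as an $n^2$-bit string. Each entry of $\mat A$ is either a fixed constant (the $0$'s and $1$'s coming from the identity-type blocks and the vectors $1_n$) or equals $X_{i,j}$ or $X_{j,i}$ for indices $(i,j)$ determined reversibly by the position in $\mat A$. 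Hence one oracle call to $\mat A$ can be implemented coherently by (at most) one oracle call to $\mat X$ preceded and followed by a reversible index relabeling, with the constant entries handled without any query. Running $\mathcal B$ on this $\mat A$ thus computes the Hamming weight of $\mat X$ with $q$ queries to $\mat X$ and success probability $\geq 2/3$.

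Finally, from the exact Hamming weight of an $n^2$-bit string one decides $\mathrm{MAJ}_{n^2}$ (is the weight $> n^2/2$?) with no further queries and the same success probability, so $q = \Omega(n^2)$ by the $\Theta(m)$ quantum query lower bound for $\mathrm{MAJ}_m$~\cite{Beals01PolynomialMethod}. Since $n = (N-2)/2$, this is $q = \Omega(N^2)$; combined with the trivial $O(N^2)$ upper bound, determining the designated off-diagonal entry $(\mat A^3)_{1,N}$ requires $\Theta(N^2)$ quantum queries, which is the claim (renaming $N$ back to $n$). The robustness noted in the introduction follows the same way: since the output value is an integer in $\{0,\dots,n^2\}$, a $\delta$-additive approximation with $\delta < 1/2$ of the (suitably normalized) entry already pins down the Hamming weight, so the same $\Omega(n^2)$ bound applies to the approximate version.

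The only point requiring care — not a genuine obstacle — is the query-simulation step: one must verify that the block layout of $\mat A$ assigns to each variable a controlled number of positions and that the map from a position of $\mat A$ to the corresponding index of $\mat X$ (or to ``constant, value $c$'') is computable reversibly without queries, so that the reduction loses no factor in the query count. Everything else is bookkeeping.
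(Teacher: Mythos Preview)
Your proposal is correct and follows exactly the approach the paper intends: the paper states this corollary without proof, as an immediate consequence of Lemma~\ref{lem:Matrix power lower bound} together with the $\Theta(m)$ quantum query complexity of $\mathrm{MAJ}_m$, and you have filled in precisely those details (the query-simulation of $\mat A$ by $\mat X$, the reduction to Hamming weight and hence to $\mathrm{MAJ}_{n^2}$, and the trivial upper bound).
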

\begin{lem}
    Let $\mat A  \in \{0, 1\}^{n \times n}$. Then, for $N = 4n$, the matrix $\mat B \in \{ 0,1 \}^{N \times N}$ defined by 
    \[
        \mat B = \begin{bmatrix}
            I & \mat A && \\
            & I & \mat A & \\
            && I & \mat A \\
            &&& I
        \end{bmatrix},
    \]
    satisfies $(\mat B^{-1})_{1,N} = -(\mat A^3)_{1,n}$.
\end{lem}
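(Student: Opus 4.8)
The plan is to write $\mat B = I + \mat N$, where $\mat N$ is the block-nilpotent matrix obtained from $\mat B$ by deleting the diagonal identity blocks, i.e. $\mat N$ has $\mat A$ in each of the three super-diagonal blocks (block positions $(1,2)$, $(2,3)$, $(3,4)$) and zeros elsewhere. Since the block structure is $4\times 4$, the matrix $\mat N$ shifts block-coordinates strictly upward, so $\mat N^4 = 0$; concretely, writing $\mat N = \sum_{k=1}^{3} E_{k,k+1}\otimes \mat A$ with $E_{k,k+1}$ the $4\times 4$ matrix units, one gets $\mat N^2 = \sum_{k=1}^{2} E_{k,k+2}\otimes \mat A^2$, $\mat N^3 = E_{1,4}\otimes \mat A^3$, and $\mat N^4 = 0$.

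Next I would invert $\mat B$ using the finite Neumann series that is valid because $\mat N$ is nilpotent:
\[
\mat B^{-1} = (I+\mat N)^{-1} = I - \mat N + \mat N^2 - \mat N^3 .
\]
From the explicit form of the powers above, the only term contributing to the block in position $(1,4)$ is $-\mat N^3 = -E_{1,4}\otimes \mat A^3$, so the $(1,4)$ block of $\mat B^{-1}$ equals $-\mat A^3$ and every other upper-right contribution vanishes.

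Finally I would translate block indices into scalar indices. With $N = 4n$, the block at position $(1,4)$ occupies rows $1,\dots,n$ and columns $3n+1,\dots,4n$, so the scalar entry $(\mat B^{-1})_{1,N} = (\mat B^{-1})_{1,4n}$ is precisely the $(1,n)$ entry of that block, namely $(-\mat A^3)_{1,n} = -(\mat A^3)_{1,n}$, as claimed.

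There is essentially no obstacle here: the whole argument is the observation that $\mat B$ is a unipotent block-triangular matrix whose "off-diagonal part" $\mat N$ is nilpotent of index $4$, together with bookkeeping of block versus scalar indices. The only point requiring a moment of care is getting the sign right — the $\mat A^3$ term appears with coefficient $(-1)^3 = -1$ in the Neumann expansion — and confirming that the $4\times4$ block length is exactly what makes $\mat N^3$ (rather than a higher power) the relevant term, which is why the construction uses $N = 4n$.
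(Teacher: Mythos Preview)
Your proposal is correct and is essentially the same as the paper's proof: the paper simply writes down the explicit block form of $\mat B^{-1}$ (with $(-1)^{j-i}\mat A^{\,j-i}$ in the $(i,j)$ block) and reads off the $(1,N)$ entry, which is exactly what your Neumann-series computation $\mat B^{-1}=I-\mat N+\mat N^2-\mat N^3$ produces. You have just made explicit the nilpotency argument that the paper leaves as ``straightforward to verify.''
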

\begin{proof}
    It is straightforward to verify that the inverse of $\mat B$ is 
    \[
        \mat B^{-1} = \begin{bmatrix}
            I & -\mat A & \mat A^2 & -\mat A^3 \\
              &  I & -\mat A  &  \mat A^2 \\
              &    &  I  & -\mat A   \\
              &    &     &  I   \\
        \end{bmatrix}. \qedhere
    \]
\end{proof}
If $\mat A $ is the adjacency matrix of the directed version of the graph described in Lemma~\ref{lem:Matrix power lower bound}, we can also compute the norm of the last column as follows:
\[
    \norm{B^{-1} \vec{e}_{4n}}^2 = \left( \sum_{i,j} X_{i,j} \right)^2 + \sum_i \left(\sum_j X_{i,j} \right)^2 + n+1.
\]
In particular, for the hard instances (where $|n/2 - \sum_{i,j} X_{i,j}| \leq 1$), we have that $\norm{\mat B^{-1} \vec{e}_{4n}} = \Theta(n^2)$.
\begin{cor}
Let $\mat A  \in \{0,1\}^{n \times n}$. Determining a single off-diagonal entry of $\mat A ^{-1}$ up to precision $<1/2$, with success probability $\geq 2/3$, takes $\Theta(n^2)$ quantum queries to $\mat A $.
\end{cor}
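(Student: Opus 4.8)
The plan is to assemble the two preceding reductions into a single reduction from the majority function, so that the statement follows from the quantum query lower bound $\Omega(m)$ for $\mathrm{MAJ}_m$ \cite{Beals01PolynomialMethod}. Concretely, I would fix $m = \Theta(n^2)$ with $N := 8\lceil\sqrt m\rceil + 8 \le n$ and show that any algorithm that determines one off-diagonal entry of the inverse of an $n\times n$ $\{0,1\}$-matrix to additive precision $<1/2$ (with success probability $\ge 2/3$) using $q$ queries yields an algorithm for $\mathrm{MAJ}_m$ using $O(q)$ queries; this forces $q=\Omega(m)=\Omega(n^2)$. The matching upper bound is immediate, since querying all $n^2$ entries of $\mat A$ and inverting classically determines $\mat A^{-1}$ exactly.

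For the reduction, given an instance $\vec x\in\{0,1\}^m$ I would first reshape it (padding with zeros) into a matrix $\mat X\in\{0,1\}^{r\times r}$ with $r=\lceil\sqrt m\rceil$, so that $\sum_{i,j}X_{i,j}=\sum_i x_i$. Next, \cref{lem:Matrix power lower bound} produces $\mat C\in\{0,1\}^{(2r+2)\times(2r+2)}$ with $(\mat C^3)_{1,2r+2}=\sum_{i,j}X_{i,j}$ — these first two steps are exactly the content of the preceding corollary on off-diagonal entries of $\mat A^3$. Then the subsequent lemma (the $4\times 4$ block upper-bidiagonal matrix with $I$ on the diagonal and $\mat C$ on the superdiagonal) produces $\mat B\in\{0,1\}^{N\times N}$ with $(\mat B^{-1})_{1,N}=-(\mat C^3)_{1,2r+2}=-\sum_i x_i$; if $N<n$ I replace $\mat B$ by $\mat B\oplus I_{n-N}$, which is still a $\{0,1\}$-matrix, still invertible, and has the same $(1,N)$-entry of its inverse. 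Two observations then close the argument: (i) $(\mat B^{-1})_{1,N}$ is off-diagonal (as $N\ge 2$) and integer-valued, so any estimate within $1/2$ of it, rounded to the nearest integer, recovers $\sum_i x_i$ and hence $\mathrm{MAJ}_m(\vec x)$; and (ii) every entry of $\mat B$ is either a fixed element of $\{0,1\}$ or a single entry of $\mat C$, and every entry of $\mat C$ is either fixed or a single bit $x_i$, so one query to $\mat B$ costs at most one query to $\vec x$.

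The bulk of the write-up is then routine bookkeeping: tracking how the dimension grows by constant factors across the three stages (so that $m$ ends up $\Theta(n^2)$ and the final matrix has size exactly $n$), and checking that the entry read off is genuinely off-diagonal and equals $-\sum_i x_i$ exactly — both immediate from \cref{lem:Matrix power lower bound} and the $\mat B$-lemma. I expect the main conceptual point, rather than any calculation, to be the role of \emph{exactness}: the off-diagonal entry of $\mat B^{-1}$ is an integer equal to (minus) the Hamming weight of $\vec x$, so pinning it down to precision $<1/2$ is the exact-counting problem, which is at least as hard as $\mathrm{MAJ}_m$ and therefore requires $\Omega(m)=\Omega(n^2)$ queries; by contrast a constant-relative-error estimate of that entry would be obtainable with only $O(n)$ queries via quantum approximate counting, so weakening the precision requirement would destroy the quadratic bound. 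I would also note that for this classical-output statement the undirected adjacency matrix of \cref{lem:Matrix power lower bound} suffices; the directed variant and the computation $\norm{\mat B^{-1}\vec e_N}=\Theta(n^2)$ are needed only for the robust/normalized formulation, which follows from the same reduction after dividing through by this norm.
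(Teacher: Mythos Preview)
Your proposal is correct and follows the same approach as the paper: the paper states this corollary without proof, leaving it as an immediate consequence of composing \cref{lem:Matrix power lower bound} (which reduces counting bits of $\mat X$ to reading an off-diagonal entry of $\mat C^3$) with the subsequent $\mat B$-lemma (which reduces reading an entry of $\mat C^3$ to reading an off-diagonal entry of $\mat B^{-1}$), together with the $\Omega(m)$ lower bound for $\mathrm{MAJ}_m$. Your write-up makes explicit the bookkeeping (dimension tracking, padding with identity, the integrality argument that precision $<1/2$ recovers the exact value, and that each query to $\mat B$ simulates at most one query to $\vec x$) that the paper leaves to the reader, and your remarks distinguishing the undirected construction here from the directed variant used for the normalized statement are accurate.
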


\bibliographystyle{plain}
\bibliography{bibliography}

@article{Beals01PolynomialMethod,
author = {Beals, Robert and Buhrman, Harry and Cleve, Richard and Mosca, Michele and de Wolf, Ronald},
title = {Quantum Lower Bounds by Polynomials},
year = {2001},
issue_date = {July 2001},
publisher = {Association for Computing Machinery},
address = {New York, NY, USA},
volume = {48},
number = {4},
issn = {0004-5411},
url = {https://doi.org/10.1145/502090.502097},
doi = {10.1145/502090.502097},
journal = {J. ACM},
month = jul,
pages = {778–797},
}

@article{Vishnoi2013Laplacian,
url = {http://dx.doi.org/10.1561/0400000054},
year = {2013},
volume = {8},
journal = {Foundations and Trends® in Theoretical Computer Science},
title = {Lx = b},
doi = {10.1561/0400000054},
issn = {1551-305X},
number = {1–2},
pages = {1-141},
author = {Nisheeth K. Vishnoi}
}

@incollection {Ambainis2012VTAA,
    AUTHOR = {Ambainis, Andris},
     TITLE = {Variable time amplitude amplification and quantum algorithms
              for linear algebra problems},
 BOOKTITLE = {29th {I}nternational {S}ymposium on {T}heoretical {A}spects of
              {C}omputer {S}cience},
    SERIES = {LIPIcs. Leibniz Int. Proc. Inform.},
    VOLUME = {14},
     PAGES = {636--647},
 PUBLISHER = {Schloss Dagstuhl. Leibniz-Zent. Inform., Wadern},
      YEAR = {2012},
   MRCLASS = {68Q12 (15-04)},
  MRNUMBER = {2909352},
}

@article {Baricz2014Bessel,
    AUTHOR = {Baricz, \'{A}rp\'{a}d. and Pog\'{a}ny, Tibor K.},
     TITLE = {On a sum of modified {B}essel functions},
   JOURNAL = {Mediterr. J. Math.},
  FJOURNAL = {Mediterranean Journal of Mathematics},
    VOLUME = {11},
      YEAR = {2014},
    NUMBER = {2},
     PAGES = {349--360},
      ISSN = {1660-5446},
   MRCLASS = {33C10},
  MRNUMBER = {3198613},
MRREVIEWER = {Alireza Ansari},
       DOI = {10.1007/s00009-013-0365-y},
       URL = {https://doi.org/10.1007/s00009-013-0365-y},
}

@incollection {Berry2015Hamiltonian,
    AUTHOR = {Berry, Dominic W. and Childs, Andrew M. and Kothari, Robin},
     TITLE = {Hamiltonian simulation with nearly optimal dependence on all
              parameters},
 BOOKTITLE = {2015 {IEEE} 56th {A}nnual {S}ymposium on {F}oundations of
              {C}omputer {S}cience---{FOCS} 2015},
     PAGES = {792--809},
 PUBLISHER = {IEEE Computer Soc., Los Alamitos, CA},
      YEAR = {2015},
   MRCLASS = {68Q12},
  MRNUMBER = {3473341},
MRREVIEWER = {Heng Guo},
       DOI = {10.1109/FOCS.2015.54},
       URL = {https://doi.org/10.1109/FOCS.2015.54},
}

@incollection {Chakraborty2019BE,
    AUTHOR = {Chakraborty, Shantanav and Gily\'{e}n, Andr\'{a}s and Jeffery, Stacey},
     TITLE = {The power of block-encoded matrix powers: improved regression
              techniques via faster {H}amiltonian simulation},
 BOOKTITLE = {46th {I}nternational {C}olloquium on {A}utomata, {L}anguages,
              and {P}rogramming},
    SERIES = {LIPIcs. Leibniz Int. Proc. Inform.},
    VOLUME = {132},
     PAGES = {Art. No. 33, 14},
 PUBLISHER = {Schloss Dagstuhl. Leibniz-Zent. Inform., Wadern},
      YEAR = {2019},
   MRCLASS = {68Q12},
  MRNUMBER = {3984850},
}

@article{chao2020finding,
  title={Finding angles for quantum signal processing with machine precision},
  author={Chao, Rui and Ding, Dawei and Gilyen, Andras and Huang, Cupjin and Szegedy, Mario},
  journal={arXiv:2003.02831},
  year={2020}
}

@article {CKS17,
    AUTHOR = {Childs, Andrew M. and Kothari, Robin and Somma, Rolando D.},
     TITLE = {Quantum algorithm for systems of linear equations with
              exponentially improved dependence on precision},
   JOURNAL = {SIAM J. Comput.},
  FJOURNAL = {SIAM Journal on Computing},
    VOLUME = {46},
      YEAR = {2017},
    NUMBER = {6},
     PAGES = {1920--1950},
      ISSN = {0097-5397},
   MRCLASS = {68Q12 (65F05)},
  MRNUMBER = {3738854},
       DOI = {10.1137/16M1087072},
       URL = {https://doi.org/10.1137/16M1087072},
}

@article{dong2021efficient,
  title={Efficient phase-factor evaluation in quantum signal processing},
  author={Dong, Yulong and Meng, Xiang and Whaley, K Birgitta and Lin, Lin},
  journal={Physical Review A},
  volume={103},
  number={4},
  pages={042419},
  year={2021},
  publisher={APS}
}

@article {Dorn2009QueryComplexity,
    AUTHOR = {D\"{o}rn, Sebastian and Thierauf, Thomas},
     TITLE = {The quantum query complexity of the determinant},
   JOURNAL = {Inform. Process. Lett.},
  FJOURNAL = {Information Processing Letters},
    VOLUME = {109},
      YEAR = {2009},
    NUMBER = {6},
     PAGES = {325--328},
      ISSN = {0020-0190},
   MRCLASS = {68Q25 (68Q12 81P68)},
  MRNUMBER = {2493732},
       DOI = {10.1016/j.ipl.2008.11.006},
       URL = {https://doi.org/10.1016/j.ipl.2008.11.006},
}

@article {Gentleman1972DCT,
    AUTHOR = {Gentleman, W. Morvin},
     TITLE = {Implementing {C}lenshaw-{C}urtis quadrature. {II}. {C}omputing
              the cosine transformation},
   JOURNAL = {Comm. ACM},
  FJOURNAL = {Communications of the Association for Computing Machinery},
    VOLUME = {15},
      YEAR = {1972},
     PAGES = {343--346},
      ISSN = {0001-0782},
   MRCLASS = {65D30},
  MRNUMBER = {0327002},
MRREVIEWER = {D. F. Mayers},
       DOI = {10.1145/355602.361311},
       URL = {https://doi.org/10.1145/355602.361311},
}

@inproceedings {Gilyen2019SVT,
    AUTHOR = {Gily\'{e}n, Andr\'{a}s and Su, Yuan and Low, Guang Hao and Wiebe,
              Nathan},
     TITLE = {Quantum singular value transformation and beyond: exponential
              improvements for quantum matrix arithmetics},
 BOOKTITLE = {S{TOC}'19---{P}roceedings of the 51st {A}nnual {ACM} {SIGACT}
              {S}ymposium on {T}heory of {C}omputing},
     PAGES = {193--204},
 PUBLISHER = {ACM, New York},
      YEAR = {2019},
   MRCLASS = {68Q12},
  MRNUMBER = {4003331},
}

@article{Haah2019product,
  doi = {10.22331/q-2019-10-07-190},
  url = {https://doi.org/10.22331/q-2019-10-07-190},
  title = {Product {D}ecomposition of {P}eriodic {F}unctions in {Q}uantum {S}ignal {P}rocessing},
  author = {Haah, Jeongwan},
  journal = {{Quantum}},
  issn = {2521-327X},
  publisher = {{Verein zur F{\"{o}}rderung des Open Access Publizierens in den Quantenwissenschaften}},
  volume = {3},
  pages = {190},
  month = oct,
  year = {2019}
}

@article{HHL09,
  title = {Quantum Algorithm for Linear Systems of Equations},
  author = {Harrow, Aram W. and Hassidim, Avinatan and Lloyd, Seth},
  journal = {Phys. Rev. Lett.},
  volume = {103},
  issue = {15},
  pages = {150502},
  numpages = {4},
  year = {2009},
  month = {Oct},
  publisher = {American Physical Society},
  doi = {10.1103/PhysRevLett.103.150502},
  url = {https://link.aps.org/doi/10.1103/PhysRevLett.103.150502}
}

@article{Kerenidis2021QSVM,
	doi = {10.22331/q-2021-04-08-427},
	url = {https://doi.org/10.22331/q-2021-04-08-427},
	title = {Quantum algorithms for {S}econd-{O}rder {C}one {P}rogramming and {S}upport {V}ector {M}achines},
	author = {Kerenidis, Iordanis and Prakash, Anupam and Szil{\'{a}}gyi, D{\'{a}}niel},
	journal = {{Quantum}},
	issn = {2521-327X},
	publisher = {{Verein zur F{\"{o}}rderung des Open Access Publizierens in den Quantenwissenschaften}},
	volume = {5},
	pages = {427},
	month = apr,
	year = {2021}
}

@article {KP20Gradient,
	AUTHOR = {Kerenidis, Iordanis and Prakash, Anupam},
	TITLE = {Quantum gradient descent for linear systems and least squares},
	JOURNAL = {Phys. Rev. A},
	FJOURNAL = {Physical Review A},
	VOLUME = {101},
	YEAR = {2020},
	NUMBER = {2},
	PAGES = {022316, 18},
	ISSN = {2469-9926},
	MRCLASS = {68Q12 (65K05 81P68 90C05)},
	MRNUMBER = {4103508},
}

@article{linden2020quantum,
	title={Quantum vs. classical algorithms for solving the heat equation},
	author={Linden, Noah and Montanaro, Ashley and Shao, Changpeng},
	journal={arXiv:2004.06516},
	year={2020}
}

@article{Low2019hamiltonian,
  doi = {10.22331/q-2019-07-12-163},
  url = {https://doi.org/10.22331/q-2019-07-12-163},
  title = {Hamiltonian {S}imulation by {Q}ubitization},
  author = {Low, Guang Hao and Chuang, Isaac L.},
  journal = {{Quantum}},
  issn = {2521-327X},
  publisher = {{Verein zur F{\"{o}}rderung des Open Access Publizierens in den Quantenwissenschaften}},
  volume = {3},
  pages = {163},
  month = jul,
  year = {2019}
}

@misc{Low2017UniformAmp,
      title={Hamiltonian Simulation by Uniform Spectral Amplification}, 
      author={Guang Hao Low and Isaac L. Chuang},
      year={2017},
      eprint={1707.05391},
      archivePrefix={arXiv},
      primaryClass={quant-ph}
}

@article{martyn2021grand,
	title={A grand unification of quantum algorithms},
	author={Martyn, John M and Rossi, Zane M and Tan, Andrew K and Chuang, Isaac L},
	journal={arXiv:2105.02859},
	year={2021}
}

@article{orsucci2021solving,
  title={On solving classes of positive-definite quantum linear systems with quadratically improved runtime in the condition number},
  author={Orsucci, Davide and Dunjko, Vedran},
  journal={arXiv:2101.11868},
  year={2021}
}

@book {Polyak1987Optimization,
    AUTHOR = {Polyak, Boris T.},
     TITLE = {Introduction to optimization},
    SERIES = {Translations Series in Mathematics and Engineering},
      NOTE = {Translated from the Russian,
              With a foreword by Dimitri P. Bertsekas},
 PUBLISHER = {Optimization Software, Inc., Publications Division, New York},
      YEAR = {1987},
     PAGES = {xxvii+438},
      ISBN = {0-911575-14-6},
   MRCLASS = {49-01 (65Kxx 90Cxx)},
  MRNUMBER = {1099605},
}

@article{Rebentrost2014SVM,
	title = {Quantum Support Vector Machine for Big Data Classification},
	author = {Rebentrost, Patrick and Mohseni, Masoud and Lloyd, Seth},
	journal = {Phys. Rev. Lett.},
	volume = {113},
	issue = {13},
	pages = {130503},
	numpages = {5},
	year = {2014},
	month = {Sep},
	publisher = {American Physical Society},
	doi = {10.1103/PhysRevLett.113.130503},
	url = {https://link.aps.org/doi/10.1103/PhysRevLett.113.130503}
}

@article{tong2020fast,
	title={Fast inversion, preconditioned quantum linear system solvers, and fast evaluation of matrix functions},
	author={Tong, Yu and An, Dong and Wiebe, Nathan and Lin, Lin},
	journal={arXiv:2008.13295},
	year={2020}
}

@article {Vishnoi2013Approximation,
    AUTHOR = {Sachdeva, Sushant and Vishnoi, Nisheeth K.},
     TITLE = {Faster algorithms via approximation theory},
   JOURNAL = {Found. Trends Theor. Comput. Sci.},
  FJOURNAL = {Foundations and Trends${}^\circledR$ in Theoretical Computer
              Science},
    VOLUME = {9},
      YEAR = {2013},
    NUMBER = {2},
     PAGES = {125--213},
      ISSN = {1551-305X},
   MRCLASS = {68W25 (05Cxx)},
  MRNUMBER = {3191793},
       DOI = {10.1561/0400000065},
       URL = {https://doi.org/10.1561/0400000065},
}

\appendix 

\section{Examples of functions with bounded Chebyshev coefficient norms}\label{app:Good example functions for LCU}
The inverse function is not the only function that can be efficiently evaluated using LCU of Chebyshev polynomials. Here we discuss several families of functions for which the 1-norm of the Chebyshev coefficients is of the order $\log(\text{degree})$.

\subsection{Simple examples}\label{app:Simple functions with bounded coeffs}
We first observe that the monomial $x^n$ has the following Chebyshev expansion:
\[
    x^{n}=2^{1-n}\mathop {{\sum }'} _{j=0,\,n-j\,\mathrm {even} }^{n}{\binom {n}{\tfrac {n-j}{2}}}\cheb_{j}(x),
\]
where the prime at the sum symbol indicates that the contribution of $j = 0$ needs to be halved (if it appears). The sum of these coefficients is bounded by 1. This implies that for any polynomial the $1$-norm of the coefficients in the Chebyshev basis is at most the $1$-norm of the coefficients in the monomial basis. 
This means, for example, that the Chebyshev coefficient 1-norm of the scaled exponential is at most $1$. Similarly, for a degree $n$ approximation of the (scaled) logarithm the $1$-norm grows as $O(\log n)$. In particular, they have the following Taylor expansions for $\kappa \geq 1$
\begin{align*}
    e^{\kappa (x-1)} = e^{-\kappa} \sum_{n=0}^\infty \frac{(\kappa x)^n}{n!},
\end{align*}\begin{align*}
    \operatorname{slog}_\kappa(x) &:= \log( 1/\kappa + ((x+1)/2)(1-1/\kappa) ) = \log\left( \frac{\kappa+1}{2\kappa} \left( 1+\frac{\kappa - 1}{\kappa + 1}x \right) \right) \\
    &= \log\left( \frac{\kappa+1}{2\kappa} \right) + \sum_{n=1}^\infty \frac{(-1)^{n+1}}{n} \left( \frac{\kappa - 1}{\kappa + 1} \right)^n x^n.
\end{align*}

\subsection{Approximating discontinuities -- the error function}\label{app:The error function}
Some more interesting examples are the sign and the rectangle functions, defined as 
\[
    \sign(x) := \begin{cases}
        1 & \text{if } x>0,\\
        0 & \text{if } x=0,\\
        -1 & \text{if } x<0,
    \end{cases} \quad\text{and}\quad
    \Pi(x) := \begin{cases}
        1 & \text{if } |x| \leq 1/2,\\
        0 & \text{else}.
    \end{cases}
\]
It is well-known \cite{Low2017UniformAmp, Gilyen2019SVT} that the error-function $\erf(x) = \frac{2}{\sqrt{\pi}}\int_{0}^x e^{-z^2} \dif z$ is a fundamental building block for approximating discontinuous functions. For example, given $\epsilon, \delta > 0$, there exists a choice of $\kappa = O(\polylog(1/\epsilon)/\delta)$ such that $\erf(\kappa x)$ is $\epsilon$-close to $\sign(x)$ on $[-1,1] \setminus [-\delta, \delta]$. We show below that the $1$-norm of the coefficients of the Chebyshev series of $\erf(\kappa x)$ is $O(\log \kappa)$. We start with the following expansion from \cite{Low2017UniformAmp}:
\[
    \erf(\kappa x) = \frac{2\kappa e^{-\kappa^2/2}}{\sqrt{\pi}} \left( I_0(\kappa^2/2)x + \sum_{j=1}^{\infty} I_j(\kappa^2/2)(-1)^j \left( \frac{\cheb_{2j+1}(x)}{2j+1} - \frac{\cheb_{2j-1}(x)}{2j-1} \right)\right).
\]
By regrouping the terms, we get the following explicit form of the Chebyshev series of $\erf(\kappa x)$:
\begin{equation}\label{eq:erf coefficients}
    \erf(\kappa x) = \frac{2\kappa e^{-\kappa ^2/2}}{\sqrt{\pi}} \sum_{n=0}^\infty (-1)^n \frac{I_n(\kappa^2/2) + I_{n+1}(\kappa^2/2)}{2n+1} \cheb_{2n+1}(x).
\end{equation}
Now, in order to bound the coefficient norm, we use the following inequality from \cite{Baricz2014Bessel}:
\[
    e^{-x} x^{-n} (I_n(x) + I_{n+1}(x)) \leq \sqrt{\frac{2}{\pi}} \left( x+ \frac{n}{2} + \frac14 \right)^{-n-\frac12}.
\]
Note that $I_n(x) \geq 0$ for $x \geq 0$ and all $n \in \N$. So, the above in fact bounds the absolute value of the left hand side. We use this inequality to bound the (absolute value of the) coefficient of $\cheb_{2n+1}(x)$ in \eqref{eq:erf coefficients} as follows:
\begin{equation}\label{eq:erf coefficient bound}
    \frac{2\kappa e^{-\kappa^2/2}}{\sqrt{\pi}} \frac{I_n(\kappa^2/2) + I_{n+1}\kappa^2/2)}{2n+1} \leq \frac{4}{\pi} \frac{1}{2n+1} \left( \frac{\kappa^2}{\kappa^2 + n+ 1/2} \right)^{n+1/2}.
\end{equation}
Using this inequality, we can bound the coefficient norm of the truncated Chebyshev series: 
\begin{lem}
    Let $N > 0$ be an integer. Then, 
    \[
        \frac{2\kappa e^{-\kappa^2/2}}{\sqrt{\pi}} \sum_{n=0}^N \frac{I_n(\kappa^2/2) + I_{n+1}(\kappa^2/2)}{2n+1} \leq \frac{6 + 2\log N}{\pi}.
    \]
\end{lem}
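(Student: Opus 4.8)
The plan is to split the sum at a threshold $n_0 \approx \kappa^2$ and bound each piece separately using the inequality \eqref{eq:erf coefficient bound}, which gives
\[
    \frac{2\kappa e^{-\kappa^2/2}}{\sqrt{\pi}} \frac{I_n(\kappa^2/2) + I_{n+1}(\kappa^2/2)}{2n+1} \leq \frac{4}{\pi} \frac{1}{2n+1} \left( \frac{\kappa^2}{\kappa^2 + n + 1/2} \right)^{n+1/2}.
\]
For the low-index range $0 \leq n \leq n_0$, I would simply throw away the factor $\left( \frac{\kappa^2}{\kappa^2+n+1/2}\right)^{n+1/2} \leq 1$ and bound $\sum_{n=0}^{n_0} \frac{1}{2n+1}$ by (half of) a harmonic sum, which is $O(\log n_0) = O(\log \kappa)$. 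The only slightly delicate point is that I must be careful about whether the threshold $n_0$ should be taken as roughly $\kappa^2$ (so that $\log n_0 = O(\log\kappa)$) while still keeping the tail small; taking $n_0 = \kappa^2$ or even $n_0 = N$ (if $N \leq \kappa^2$, in which case the whole statement is immediate) both work.

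For the high-index range $n > \kappa^2$, the geometric-type factor takes over. Writing $\left( \frac{\kappa^2}{\kappa^2+n+1/2}\right)^{n+1/2} = \left( 1 + \frac{n+1/2}{\kappa^2}\right)^{-(n+1/2)}$ and using $1+u \geq e^{u/2}$ for $0 \leq u \leq$ some constant, or more simply $\left(1+\frac{m}{\kappa^2}\right)^{-m} \leq e^{-m^2/(2\kappa^2)}$ when $m \leq \kappa^2$ and a cleaner exponential bound when $m > \kappa^2$, one gets that the $n$-th term decays like $e^{-\Omega(n^2/\kappa^2)}$ (for $n \lesssim \kappa^2$) and super-exponentially beyond that. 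Summing $\sum_{n > n_0} e^{-n^2/(2\kappa^2)}$ against the coefficient $\frac{4}{\pi(2n+1)}$ gives a convergent series bounded by an absolute constant — in fact by comparing with $\int_0^\infty e^{-x^2/(2\kappa^2)}\,dx = \sqrt{\pi/2}\,\kappa$ and recalling the leading $\frac{1}{2n+1}$ that suppresses it, one sees the tail contributes $O(1)$. The target constants $\frac{6+2\log N}{\pi}$ then come from being slightly careful: the harmonic part $\frac{4}{\pi}\cdot\frac12(1 + \log N) = \frac{2 + 2\log N}{\pi}$ from $\sum_{n=0}^N \frac{1}{2n+1} \leq \frac12(1+\log N) + 1$ hmm — more precisely $\sum_{n=0}^{N}\frac{1}{2n+1} \leq 1 + \frac12\log N$, giving $\frac{4}{\pi}\left(1 + \frac12 \log N\right) = \frac{4 + 2\log N}{\pi}$, and the leftover tail beyond the naive bound contributes at most $\frac{2}{\pi}$, for a total of $\frac{6+2\log N}{\pi}$.

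The main obstacle is the bookkeeping of constants rather than any conceptual difficulty: I need the split point and the elementary inequality $\sum_{n=0}^N \frac{1}{2n+1} \leq 1 + \frac12\log N$ (valid for $N \geq 1$) to be tight enough that the final bound really comes out as $\frac{6 + 2\log N}{\pi}$ and not something slightly larger. A robust way to handle this is to note $\frac{\kappa^2}{\kappa^2+n+1/2} \leq 1$ unconditionally, so that
\[
    \frac{2\kappa e^{-\kappa^2/2}}{\sqrt{\pi}} \sum_{n=0}^N \frac{I_n(\kappa^2/2) + I_{n+1}(\kappa^2/2)}{2n+1} \leq \frac{4}{\pi} \sum_{n=0}^N \frac{1}{2n+1} \leq \frac{4}{\pi}\left(1 + \frac{\log N}{2}\right) = \frac{4 + 2\log N}{\pi} \leq \frac{6 + 2\log N}{\pi},
\]
which in fact proves the lemma directly with room to spare — the exponential decay is not even needed for this particular (truncated) statement, only the crude bound $\left(\frac{\kappa^2}{\kappa^2+n+1/2}\right)^{n+1/2} \leq 1$ together with the harmonic estimate. (The exponential decay is what one would invoke separately to argue that a truncation at $N = \tilde O(\kappa)$ already captures the full series up to small error.) So I would lead with this short argument and remark afterwards that the decay in \eqref{eq:erf coefficient bound} additionally controls the tail $\sum_{n>N}$.
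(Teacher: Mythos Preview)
Your final argument is exactly the paper's proof: apply \eqref{eq:erf coefficient bound}, discard the factor $\bigl(\tfrac{\kappa^2}{\kappa^2+n+1/2}\bigr)^{n+1/2}\le 1$, and bound the remaining odd-harmonic sum. One small slip: the inequality $\sum_{n=0}^{N}\frac{1}{2n+1}\le 1+\tfrac12\log N$ is false for small $N$ (e.g.\ $N=1$ gives $4/3>1$); the paper instead uses $\frac{1}{2n+1}\le\frac{1}{2n}$ and $H_N\le 1+\log N$ to get $\sum_{n=0}^{N}\frac{1}{2n+1}\le \tfrac32+\tfrac12\log N$, which after multiplying by $4/\pi$ lands exactly on $\frac{6+2\log N}{\pi}$ --- so your added slack of $2/\pi$ is precisely what is needed, not spare room.
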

\begin{proof}
    Using \eqref{eq:erf coefficient bound}, and the fact that $0 \leq \frac{\kappa^2}{\kappa^2 + n+ 1/2} \leq 1$, we get 
    \begin{align*}
        \frac{2\kappa e^{-\kappa^2/2}}{\sqrt{\pi}} \sum_{n=0}^N \frac{I_n(\kappa^2/2) + I_{n+1}(\kappa^2/2)}{2n+1} &\leq \frac{4}{\pi}\sum_{n=0}^N  \frac{1}{2n+1} \left( \frac{\kappa^2}{\kappa^2 + n+ 1/2} \right)^{n+1/2} \\
        &\leq \frac{4}{\pi}\sum_{n=0}^N  \frac{1}{2n+1}.
    \end{align*}
    It is well-known that the last sum is $O(\log N)$. To be more precise,
    \begin{align*}
        \frac{4}{\pi}\sum_{n=0}^N  \frac{1}{2n+1} \leq \frac{4}{\pi} \left(1 + \sum_{n=1}^N \frac{1}{2n}\right) = \frac{4}{\pi} + \frac{2}{\pi} \sum_{n=1}^N \frac{1}{n} \leq \frac{4}{\pi} + \frac{2}{\pi} (1+\log N) \leq \frac{6 + 2\log N}{\pi}.
    \end{align*}
\end{proof}
Now, if we just want to bound the coefficients' 1-norm, it suffices to take $N = \lceil \kappa^2 \rceil$, and bound the rest of the coefficients using the following simple tail bound:
\begin{lem}
    Let $N \geq \kappa^2$ be an integer. Then,
    \[
        \frac{2\kappa e^{-\kappa^2/2}}{\sqrt{\pi}} \sum_{n=N}^\infty \frac{I_n(\kappa^2/2) + I_{n+1}(\kappa^2/2)}{2n+1} \leq 2^{2-N}.
    \]
\end{lem}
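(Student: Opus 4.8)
The plan is to feed the per-coefficient estimate \eqref{eq:erf coefficient bound} into a crude geometric-series tail bound, exploiting the hypothesis $N \geq \kappa^2$ to make the base of the geometric series at most $1/2$. First I would write, using \eqref{eq:erf coefficient bound},
\[
    \frac{2\kappa e^{-\kappa^2/2}}{\sqrt{\pi}} \sum_{n=N}^\infty \frac{I_n(\kappa^2/2) + I_{n+1}(\kappa^2/2)}{2n+1} \leq \frac{4}{\pi} \sum_{n=N}^\infty \frac{1}{2n+1}\left( \frac{\kappa^2}{\kappa^2 + n + 1/2}\right)^{n+1/2}.
\]
Then I would observe that for every $n \geq N \geq \kappa^2$ the map $t \mapsto t/(t+n+1/2)$ is increasing on $[0,\infty)$, so $\frac{\kappa^2}{\kappa^2 + n + 1/2} \leq \frac{n}{2n+1/2} \leq \frac12$, whence $\left(\frac{\kappa^2}{\kappa^2+n+1/2}\right)^{n+1/2} \leq 2^{-(n+1/2)} \leq 2^{-n}$.

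Combining this with the trivial bound $\frac{1}{2n+1} \leq 1$, the right-hand side is at most $\frac{4}{\pi}\sum_{n=N}^\infty 2^{-n} = \frac{4}{\pi}\cdot 2^{1-N} = \frac{8}{\pi} 2^{-N}$. Since $8/\pi < 4 = 2^2$, this is bounded by $2^{2-N}$, which is the claim.

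I do not expect any genuine obstacle here: the only step requiring a moment's care is the monotonicity argument that turns the hypothesis $N \geq \kappa^2$ into the bound $\frac{\kappa^2}{\kappa^2+n+1/2}\leq \tfrac12$, and even that is immediate once one notes $\kappa^2 \leq n$ in the summation range. Everything else is a direct application of \eqref{eq:erf coefficient bound} followed by summing a geometric series, and the constant $8/\pi$ comfortably fits under $2^2$, so no sharpening of the intermediate estimates is needed.
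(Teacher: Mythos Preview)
Your proposal is correct and follows essentially the same argument as the paper: both apply \eqref{eq:erf coefficient bound}, use $n\geq \kappa^2$ to bound $\frac{\kappa^2}{\kappa^2+n+1/2}\leq \tfrac12$, replace the resulting factor by $2^{-n}$, and sum the geometric series to get $\frac{2^{3-N}}{\pi}\leq 2^{2-N}$. The only cosmetic difference is that you spell out the monotonicity step, whereas the paper states the inequality $\frac{\kappa^2}{\kappa^2+n+1/2}\leq \tfrac12$ directly.
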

\begin{proof}
    Again, we start by using \eqref{eq:erf coefficient bound}, but now we note that for $n \geq \kappa^2$, $0 \leq \frac{\kappa^2}{\kappa^2 + n+ 1/2} \leq \frac12$, so
    \begin{align*}
        \frac{2\kappa e^{-\kappa^2/2}}{\sqrt{\pi}} \sum_{n=N}^\infty \frac{I_n(\kappa^2/2) + I_{n+1}(\kappa^2/2)}{2n+1} &\leq \frac{4}{\pi}\sum_{n=N}^\infty \frac{2^{-n}}{2n+1} \leq \frac{2^{3-N}}{\pi} \leq 2^{2-N}. \qedhere
    \end{align*}
\end{proof}
Therefore, the coefficient norm of the entire series is bounded by $\frac{6 + 2\log \kappa^2}{\pi} + 2^{2-\kappa^2} \leq 4 + 2\log \kappa$. An easy consequence of this bound is that we can approximate $\erf(\kappa x)$ up to error $0 \leq \epsilon \leq 2^{2-\kappa^2}$ with a polynomial of degree $\log_2(4/\epsilon)$.

If the desired error $\epsilon$ is larger than $2^{2-\kappa^2}$, a more careful analysis of the tail bound for $\kappa \leq N \leq \kappa^2$ yields an $\epsilon$-approximation polynomial of degree $O(k\sqrt{\log(\kappa/\epsilon)})$.
\begin{lem}
    Let $1 \leq \alpha \leq \kappa$ be an integer. Then, 
    \[
        \frac{4}{\pi} \sum_{n=\alpha \kappa}^{(\alpha+1)\kappa - 1} \frac{1}{2n+1} \left( \frac{\kappa^2}{\kappa^2 + n+ 1/2} \right)^{n+1/2} \leq \frac{4}{\pi} e^{\alpha^2 / 2}.
    \]
\end{lem}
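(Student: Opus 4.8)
The plan is to bound the sum termwise and then exploit that it has only $\kappa$ terms. Fix an integer $n$ with $\alpha\kappa \le n \le (\alpha+1)\kappa-1$. The base $\frac{\kappa^2}{\kappa^2+n+1/2}$ lies in $(0,1)$ and decreases in $n$, while the exponent $n+\tfrac12$ increases, so the term is largest at $n=\alpha\kappa$; making this precise by a short chain of elementary comparisons (enlarge the base by shrinking the denominator, then decrease the exponent — permissible since the base is below $1$, then drop the remaining $+\tfrac12$) gives, for every $n$ in the range,
\[
\left(\frac{\kappa^2}{\kappa^2+n+\tfrac12}\right)^{n+1/2}\ \le\ \left(\frac{\kappa^2}{\kappa^2+\alpha\kappa}\right)^{\alpha\kappa}\ =\ \left(1+\frac{\alpha}{\kappa}\right)^{-\alpha\kappa}.
\]
Thus the whole estimate reduces to the single inequality $\left(1+\alpha/\kappa\right)^{-\alpha\kappa}\le e^{-\alpha^2/2}$.

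For this I would take logarithms: it suffices to show $\alpha\kappa\log(1+\alpha/\kappa)\ge \alpha^2/2$, i.e. $\log(1+x)\ge x/2$ for $x:=\alpha/\kappa$. Since $\alpha\le\kappa$ we have $0<x\le 1$, and on this range $\log(1+x)\ge x-\tfrac{x^2}{2}\ge \tfrac{x}{2}$, the second step being exactly the condition $x\le 1$. This is the only place the hypothesis $\alpha\le\kappa$ is used, and it is the entire content of the lemma; everything else is bookkeeping. (Alternatively one can route this through the inequality $(1+x/n)^{n+x/2}\ge e^{x}$ already invoked in the proof of \cref{thm: cheb approx pos}, at the cost of a slightly longer manipulation.) I expect this reduction to be the only ``obstacle,'' and it is a mild one.

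Finally, combining the two displays: for each $n$ in the range the summand is at most $\frac{1}{2n+1}e^{-\alpha^2/2}\le\frac{1}{2\alpha\kappa+1}e^{-\alpha^2/2}\le\frac{1}{2\kappa}e^{-\alpha^2/2}$, using $\alpha\ge1$. There are $\kappa$ integers in $\{\alpha\kappa,\dots,(\alpha+1)\kappa-1\}$, so the sum is at most $\tfrac12 e^{-\alpha^2/2}$, and multiplying by $4/\pi$ gives the bound $\tfrac{2}{\pi}e^{-\alpha^2/2}$ — comfortably inside the claimed $\tfrac{4}{\pi}e^{\alpha^2/2}$, and in fact this establishes the stronger statement with $e^{-\alpha^2/2}$ in place of $e^{\alpha^2/2}$. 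If one prefers not to assume $\kappa\in\mathbb{N}$, replace ``$\kappa$ terms'' by ``at most $\kappa$ terms'' throughout; the constant is unaffected.
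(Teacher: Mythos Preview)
Your proof is correct and follows essentially the same route as the paper's: bound each summand by its value at $n=\alpha\kappa$, use $\log(1+x)\ge x/2$ on $(0,1]$ (equivalently $(1+\alpha/\kappa)^{-\alpha\kappa}\le e^{-\alpha^2/2}$, which is exactly where $\alpha\le\kappa$ enters), and then sum the $\kappa$ equal bounds. Your observation that the stated right-hand side should read $e^{-\alpha^2/2}$ rather than $e^{\alpha^2/2}$ is also correct --- the paper's own proof establishes the stronger bound, and indeed uses it in the subsequent application.
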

\begin{proof}
    First, we note that $\left(\frac{\kappa^2}{\kappa^2+n+1/2}\right)^{n+1/2} \leq \left(\frac{\kappa^2}{\kappa^2+n}\right)^{n}$, so we get
    \begin{align*}
        \frac{4}{\pi} &\sum_{n=\alpha \kappa}^{(\alpha+1)\kappa - 1} \frac{1}{2n+1} \left( \frac{\kappa^2}{\kappa^2 + n+ 1/2} \right)^{n+1/2} \leq \frac{4}{\pi} \sum_{n=\alpha \kappa}^{(\alpha+1)\kappa -1}  \frac{1}{2n+1} \left(\frac{\kappa^2}{\kappa^2+n}\right)^{n} \\
        &\leq\frac{4}{\pi}  \sum_{n=\alpha \kappa}^{(\alpha+1)\kappa -1}  \frac{1}{2(\alpha \kappa)+1} \left(\frac{\kappa^2}{\kappa^2+\alpha \kappa}\right)^{\alpha \kappa} = \frac{4}{\pi}  \frac{\kappa}{2(\alpha \kappa)+1} \left(\frac{\kappa}{\kappa+\alpha}\right)^{\alpha \kappa} \\
        &= \frac{4}{\pi}  \frac{\kappa}{2(\alpha \kappa)+1} \left(\frac{1}{1+\alpha/\kappa}\right)^{\alpha \kappa} \leq \frac{4}{\pi}  \frac{\kappa}{2(\alpha \kappa)+1} \left(e^{-\alpha/(2\kappa)}\right)^{\alpha \kappa} \\
        &= \frac{4}{\pi}  \frac{\kappa}{2(\alpha \kappa)+1} e^{-\alpha^2/2} \leq \frac{4}{\pi}  e^{-\alpha^2/2}.
    \end{align*}
    The second to last inequality requires $\alpha \leq \kappa$.
\end{proof}
So, to get an $\epsilon$-approximation polynomial, we just need to an integer $1 \leq \alpha_0 \leq \kappa$ such that \[
    2^{2-\kappa^2} + \frac{4}{\pi} \sum_{\alpha = \alpha_0}^\kappa e^{-\alpha^2/2} \leq \epsilon.
\]
Indeed, if we let $\epsilon' = \epsilon - 2^{2-\kappa^2}$, it suffices to choose $\alpha_0 = \left\lceil \sqrt{2\log(\frac{4 \kappa}{\pi\epsilon'})}\right\rceil$, so we get
\[
    2^{2-\kappa^2} + \frac{4}{\pi} \sum_{\alpha = \alpha_0}^\kappa e^{-\alpha^2/2} \leq 2^{2-\kappa^2}+ \frac{4\kappa}{\pi} e^{-\alpha_0^2 / 2} \leq 2^{2-\kappa^2} + \epsilon' = \epsilon.
\]
Thus, the degree of the $\epsilon$-approximating polynomial is $\alpha_0 \kappa -1 = O(\kappa \sqrt{\log(\kappa/\epsilon)})$.

\end{document}